\newcommand{\im}{\mathrm{Im}\,}
\renewcommand{\d}{\mathsf{d}}
\newcommand{\opr}[1]{\mathsf{#1}}
\newcommand{\sobsp}{\mathcal{H}}
\newcommand{\supp}{\mathrm{supp}\,}
\newcommand{\laplace}{\triangle}
\renewcommand{\O}{\mathcal{O}}
\newenvironment{proof}{{\sl Proof:}\rm}{$\blacksquare$ \\ \smallskip}
\newtheorem{definition}{Definition}[section]
\newtheorem{lemma}[definition]{Lemma}
\newtheorem{corollary}[definition]{Corollary}
\newtheorem{theorem}[definition]{Theorem}
\newtheorem{proposition}[definition]{Proposition}
\begin{document}
\begin{flushleft}
\textbf{\large INTERLACED DENSE POINT AND ABSOLUTELY CONTINUOUS
SPECTRA FOR HAMILTONIANS WITH CONCENTRIC-SHELL SINGULAR
INTERACTIONS}\\
[2em]
{\large Pavel Exner and Martin Fraas} \\ [.2em]
{\small \emph{Nuclear Physics Institute, Czech Academy of
Sciences,
25068 \v{R}e\v{z} near Prague, \\
Doppler Institute, Czech Technical University, B\v{r}ehov\'{a} 7,
11519 Prague, Czechia \\ e-mail: exner@ujf.cas.cz,
fraas@ujf.cas.cz}
\\ [1em]

\begin{abstract}
We analyze the spectrum of the generalized Schr\"odinger operator
in $L^2(\mathbb{R}^\nu),\,\nu \geq 2$, with a general local,
rotationally invariant singular interaction supported by an
infinite family of concentric, equidistantly spaced spheres. It is
shown that the essential spectrum consists of interlaced segments
of the dense point and absolutely continuous character, and that
the relation of their lengths at high energies depends on the
choice of the interaction parameters; generically the p.p.
component is asymptotically dominant. We also show that for
$\nu=2$ there is an infinite family of eigenvalues below the
lowest band.
\end{abstract}

\textbf{PACS number}: 03.65.Xp}

\textbf{keywords}: Schr\"odinger operators, singular interactions,
absolutely continuous spectrum, dense pure point spectrum

\end{flushleft}

\renewcommand{\theequation}{\arabic{section}.\arabic{equation}}

\setcounter{equation}{0}
\section{Introduction}
\label{sec:introduction}

Quantum systems with the spectrum consisting of components of a
different nature attract attention from different points of view.
Probably the most important among them concerns random potentials
in higher dimensions --- a demonstration of existence of a
mobility edge is one of the hardest questions of the present
mathematical physics. At the same time, a study of specific
non-random systems can reveal various types of spectral behaviour
which differ from the generic type.

An interesting example among these refers to the situation where
the spectrum is composed of interlacing intervals of the dense
point and absolutely continuous character. A way to construct such
models using radially periodic potentials was proposed in
\cite{Hempel2}: since at large distances in such a system the
radial and angular variables ``almost decompose'' locally and the
radial part behaves thus essentially as one-dimensional there are
spectral intervals where the particle can propagate, with the gaps
between them filled densely by localized states.

To be specific consider, e.g., the operator $\opr{t} = -\d^2 / \d
x^2 + q(x)$ on $L^2(\mathbb{R})$ with $q$ bounded and periodic. By
the standard Floquet analysis the spectrum of $\opr{t}$ is purely
absolutely continuous consisting of a family of bands,
$\sigma(\opr{t}) = \bigcup_{k=0}^N[E_{2k},E_{2k + 1}]$,
corresponding to a strictly increasing, generically infinite
sequence $\{E_k\}_{k=0}^N$. Suppose now that the potential is
mirror-symmetric, $q(x) = q(-x)$, and consider the operator
  $$ \opr{T} = -\laplace + q(|x|) $$
on $L^2(\mathbb{R}^\nu),\,\nu \geq 2$. It was shown in
\cite{Hempel2} that the essential spectrum of $\opr{T}$ covers the
half-line $[E_0,\,\infty)$, being absolutely continuous in the
spectral bands of $\opr{t}$ and dense pure point in the gaps
$(E_{2k-1},E_{2k}),\:k=1,\dots N$.

The well-known properties of one-dimensional Schr\"odinger
operators tell us that the dense point segments in this example
shrink with increasing energy at a rate determined by the
regularity of the potential. If we replace the bounded $q$ by a
family of $\delta$ interactions, the segment lengths tend instead
to a positive constant \cite{Exner2} , nevertheless, the
absolutely continuous component still dominates the spectrum at
high energies.

The aim of this paper is to investigate a similar model in which a
family of concentric, equally spaced spheres supports generalized
point interactions with identical parameters. We will demonstrate
that the interlaced spectral character persists and, depending on
the choice of the parameters, each of the components may dominate in
the high-energy limit, or neither of them. Specifically, the ratio
of the adjacent \emph{pp} and \emph{ac} spectral segments,
$(E_{2k}-E_{2k-1})/(E_{2k+1}-E_{2k})$, has three possible types of
behaviour, namely like $\mathcal{O}(k^{\mu})$ with $\mu=0, \pm 1$.
What is more, in the generic case we have $\mu=1$ so the dense point
part dominates, which is a picture very different from the
mobility-edge situation mentioned in the opening. Apart of this main
result, we are going to show that the interesting result about
existence of the so-called ``Welsh eigenvalues'' in the
two-dimensional case\cite{Welsh, Schmidt2} also extends to the case
of generalized point interactions.

\setcounter{equation}{0}
\section{The model}
\label{sec:model}

As we have said, we are going to investigate generalized
Schr\"odinger operators in $\mathbb{R}^\nu,\:\nu\geq2$, with
spherically symmetric singular interaction on concentric spheres,
the radii of which are supposed to be $R_n = nd + d/2,\: n \in
\mathbb{N}$. It is important that the system is radially periodic,
hence the interactions on all the spheres are assumed to be the
same. In view of the spherical symmetry we may employ the
partial-wave decomposition: the isometry $\mathsf{U}\,:\,
L^2((0,\,\infty),r^{\nu-1}dr)\, \rightarrow \, L^2(0,\,\infty)$
defined by  $\mathsf{U}f(r) = r^{\frac{\nu-1}{2}}f(r)$ allows us
to write $L^2(R^\nu) = \bigoplus_{l\in\mathbb{N}_0}
\mathsf{U}^{-1}L^2(0,\,\infty)\otimes S_l$, where $S_l$ is the
$l$-th eigenspace of the Laplace-Bertrami operator on the unit
sphere. The operator we are interested in can be then written as
   \begin{equation}
   \opr{{H}}_\Lambda := \bigoplus_l \mathsf{U}^{-1}
   \opr{{H}}_{\Lambda,l}\mathsf{U} \otimes \opr{I}_l ,
   \label{eq:opr}
   \end{equation}
where $ \opr{I}_l $ is the identity operator on $ S_l$ and the
$l$-th partial wave operators
\begin{equation}
    \opr{{H}}_{\Lambda,l} := -\frac{d^2}{d\,r^2} +
    \frac{1}{r^2}\left[\frac{(\nu-1)(\nu-3)}{4} +
   l(l+\nu-2)\right]
   \label{eq:ParOpr}
\end{equation}
are determined by the boundary conditions\footnote{For relations
of these conditions to the other standard parametrization of the
generalized point interaction, $(U-I)F(R_n) + i(U+I)F'(R_n)=0$,
see \cite{exner1} } at the singular points $R_n$,
\begin{equation}
 \left(\begin{array}{c}
             f(R_n +) \\
             f'(R_n +)
         \end{array}\right) =
   e^{i \chi} \left( \begin{array}{cc}
\gamma & \beta\\
\alpha & \delta
 \end{array} \right)\left(\begin{array}{c}
             f(R_n -) \\
             f'(R_n -)
         \end{array}\right);
         \label{eq:BoundCon}
\end{equation}
in the transfer matrix $\Lambda:=e^{i \chi} \binom{\gamma \,
\beta}{\alpha \, \delta} $ the parameters $\alpha, \beta, \gamma,
\delta$ are real and satisfy the condition $\alpha \beta - \gamma
\delta = -1$. In other words, the domain of the selfadjoint
operator $\opr{H}_{\Lambda,l}$ is
\begin{eqnarray}
    &D(\opr{H}_{\Lambda,l}) = \Big\{f \in L^2(0,\,\infty):\: \,f,\,f' \in
    AC_\mathrm{loc} \big((0,\,\infty) \setminus \cup_n\{R_n\} \big);& \nonumber \\
    &  -f'' + \frac{1}{r^2}\left[\frac{(\nu-1)(\nu-3)}{4} +
   l(l+\nu-2)\right]f \in L^2(0,\,\infty);& \nonumber \\
  & F(R_n+) = \Lambda F(R_n-) \Big\},&
     \label{eq:5.rest}
   \end{eqnarray}
where the last equation is a shorthand for the boundary conditions
(\ref{eq:BoundCon}). If the dimension $ \nu \leq 3 $ we have to
add a condition for behaviour of $f \in D(\opr{H}_l)$ at the
origin: for $ \nu=2,\,l=0$ we assume that $ \lim_{r \to 0+}
[\sqrt{r} \ln{r}]^{-1} f(r) =0,$ and for $\nu=3,\,l=0$ we replace
it by $ f(0+) = 0$. Since the generalized point interaction is
kept fixed, we will mostly drop the symbol $\Lambda$ in the
following.

\setcounter{equation}{0}
\section{Generalized Kronig-Penney model}
\label{sec:KP}

As in the regular case the structure of the spectrum is determined
by the underlying one-dimensional Kronig-Penney model. We need its
generalized form where the Hamiltonian acts as the one dimensional
Laplacian except at the interaction sites, $x_n := nd + d/2,\, n
\in \mathbb{Z}$, where the wave functions satisfy boundary
conditions analogous to (\ref{eq:BoundCon}). To be explicit we
consider the four-parameter family of self-adjoint operators
\begin{equation}
\label{eq:KP1D}
  \opr{h}_\Lambda f:= -f'', \;\,
  D(\opr{h}_\Lambda) = \Big\{ f \in \sobsp^{2,\,2} \big(\mathbb{R}
  \setminus
  \cup_n \{x_n\}\big):\: F(x_n+) = \Lambda F(x_n-) \Big\}
\end{equation}
with the matrix $\Lambda$ same as above (without loss of
generality we may assume $\chi = 0$ because it is easy to see that
operators differing by the value of $\chi$ are isospectral).
Spectral properties of this model were investigated in
\cite{Cheon, Exner3} where it was shown that the following three
possibilities occur:
\begin{enumerate}
\renewcommand{\labelenumi}{(\roman{enumi})}
\item \emph{the $\delta$-type:} $\beta=0$ and $\gamma = \delta
=1$. In this case the gap width is asymptotically constant; it
behaves like $2|\alpha|d^{-1} + \mathcal{O}(n^{-1})$ as the band
index $n\to\infty$. This is the standard Kronig-Penney model.
\item \emph{the intermediate type:} $\beta = 0$ and $|\gamma +
\delta|>2$. Now the quotient of the band width to the adjacent gap
width is asymptotically constant behaving as $\arcsin(2|\delta +
\gamma|^{-1})/\arccos(2|\delta + \gamma|^{-1}) + \O(n^{-1})$.
\item \emph{the $\delta'$-type:} the generic case, $\beta \ne 0$.
In this case the band width is asymptotically constant; it behaves
like $8|\beta d|^{-1} + \O(n^{-1})$ as $n\to\infty$.
\end{enumerate}
Recall that these types of spectral behaviour correspond to
high-energy properties of a single generalized point interaction
as manifested through scattering, resonances \cite{Exner1}, etc.

There is one more difference from standard Floquet theory which we
want to emphasize. It is well known \cite{Weid2} that in the
regular case the spectral edge $E_0$ corresponds to a symmetric
eigenfunction. In the singular case this is no longer true; one
can check easily the following claim.
\begin{proposition}
\label{def:Sym} Let $u$ be an $\alpha$-periodic solution of the
equation $-u'' = E_0 u$ on $(-d/2,\,d/2)$ with $U(x_n+) = \Lambda
U(x_n-)$, where $E_0:=\inf \sigma(\opr{h}_\Lambda)$. Then $u$ is
periodic for $\beta \geq 0$ and antiperiodic for $\beta < 0$.
\end{proposition}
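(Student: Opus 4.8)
The plan is to translate the statement into the language of the Floquet discriminant of the periodic operator $\opr{h}_\Lambda$ and then to read off the character of the lowest band edge from the sign of that discriminant at large negative energies. First I would build the monodromy matrix over one period. On the open interval $(-d/2,d/2)$ the equation $-u''=Eu$ has no singular point, so writing $U(x)=\bigl(u(x),u'(x)\bigr)^{\top}$ and $E=k^2$, the free transfer matrix $M_0(E)$ from $-d/2+$ to $d/2-$ is the standard matrix built from $\cos(kd)$ and $\sin(kd)$, while the jump at $x_0=d/2$ contributes the factor $\Lambda$; hence the monodromy matrix is $M(E)=\Lambda M_0(E)$. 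Since $\det\Lambda=\gamma\delta-\alpha\beta=1$ and $\det M_0=1$ we have $\det M(E)=1$. A solution is $\theta$-quasiperiodic, $U(d/2+)=e^{i\theta}U(-d/2+)$, exactly when $e^{i\theta}$ is an eigenvalue of $M(E)$, so with the discriminant $D(E):=\tfrac12\operatorname{tr}M(E)$ the periodic ($\theta=0$) solutions sit at $D(E)=1$ and the antiperiodic ($\theta=\pi$) ones at $D(E)=-1$, while $\sigma(\opr{h}_\Lambda)=\{E:|D(E)|\le1\}$.

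Second, I would compute $D$ explicitly. A direct multiplication gives, for $E=k^2$, the value $D(E)=\tfrac12\bigl[(\gamma+\delta)\cos(kd)+(\alpha/k-\beta k)\sin(kd)\bigr]$, and for $E=-\kappa^2<0$ the analytic continuation $D(-\kappa^2)=\tfrac12\bigl[(\gamma+\delta)\cosh(\kappa d)+(\beta\kappa+\alpha/\kappa)\sinh(\kappa d)\bigr]$. The crux is the behaviour as $E\to-\infty$: for $\beta\neq0$ the dominant contribution is $\tfrac12\beta\kappa\sinh(\kappa d)\sim\tfrac14\beta\kappa\,e^{\kappa d}$, so $D(E)\to+\infty$ if $\beta>0$ and $D(E)\to-\infty$ if $\beta<0$. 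In particular $|D(E)|>1$ for all sufficiently negative $E$, which already shows that $E_0=\inf\sigma(\opr{h}_\Lambda)$ is finite.

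Third, I would close the argument by a continuity and sign-tracking step. By definition $E_0$ is the smallest real number with $|D(E_0)|\le1$, so $|D(E)|>1$ on $(-\infty,E_0)$; since $\sigma(\opr{h}_\Lambda)$ is closed, $E_0$ is attained and $|D(E_0)|=1$, i.e. $D(E_0)=\pm1$. Because $D$ is continuous (indeed entire in $E$) and never vanishes on $(-\infty,E_0)$ — there $|D|>1$ — its sign is constant on that half-line and equals the limiting sign found above. Approaching $E_0$ from the left, $D$ can therefore reach the band only at $D(E_0)=+1$ when $\beta>0$ and at $D(E_0)=-1$ when $\beta<0$, so the edge solution $u$ is periodic in the first case and antiperiodic in the second, which is the asserted dichotomy.

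The main obstacle is contained entirely in the second step, the asymptotic sign of $D$; everything else is the standard Floquet bookkeeping and a topological argument that the sign cannot flip below the threshold, which is immediate from $|D|>1$ there. The only point needing separate care is the borderline value $\beta=0$, where the $\beta\kappa\sinh(\kappa d)$ term drops out and the leading behaviour is instead $\tfrac12(\gamma+\delta)\cosh(\kappa d)$, so the edge type is governed by the sign of $\gamma+\delta$; this is consistent with the stated conclusion in the $\delta$-type normalization $\gamma=\delta=1$, where $\gamma+\delta=2>0$ gives a periodic edge.
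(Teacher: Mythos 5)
The paper offers no proof of this proposition at all --- it is stated as something ``one can check easily'' --- and your Floquet-discriminant argument is precisely the intended standard check: the monodromy matrix $\Lambda M_0(E)$, the formula $D(-\kappa^2)=\tfrac12\bigl[(\gamma+\delta)\cosh(\kappa d)+(\beta\kappa+\alpha/\kappa)\sinh(\kappa d)\bigr]$, the sign of $D$ as $E\to-\infty$, and the constancy of that sign on $(-\infty,E_0)$ where $|D|>1$ are all correct and complete the argument. Your caveat at $\beta=0$ is also well taken: there $\gamma\delta=1$ forces $|\gamma+\delta|\ge 2$, and the edge is periodic exactly when $\gamma+\delta\ge 2$, so the proposition as stated implicitly uses the normalization $\gamma+\delta>0$ (a matrix with $\gamma+\delta<0$ can be absorbed into the prefactor $e^{i\chi}$, which preserves the spectrum but flips the Floquet multiplier at the edge).
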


To finish the discussion of the one-dimensional comparison
operator, let us state three auxiliary results which will be
needed in the next section.
\begin{lemma}
There is a constant $C>0$ such that for every function $u$ in the
domain of the operator $\opr{h}_{\Lambda} $ it holds that
 \begin{equation}
\|u'\| \leq C(\|\opr{h}_{\Lambda}u\| + \|u\|) \,. \label{eq:1}
\end{equation}
 \label{def:OdhadDer}
\end{lemma}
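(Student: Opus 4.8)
The plan is to reduce the global estimate to a single fixed-cell interpolation inequality and then sum over the periodicity cells. Write $I_n := (x_n, x_{n+1})$ for $n\in\mathbb{Z}$; since $x_n = nd + d/2$, every cell has the same length $d$, and the $I_n$ cover $\mathbb{R}$ up to the null set $\cup_n\{x_n\}$. For $u \in D(\opr{h}_{\Lambda})$ the function lies in $\sobsp^{2,\,2}(I_n)$ on each cell, so its classical derivative $u'$ is defined a.e.\ and the relevant norms decompose as $\|u\|^2 = \sum_n \|u\|_{L^2(I_n)}^2$, $\|u'\|^2 = \sum_n \|u'\|_{L^2(I_n)}^2$, and $\|\opr{h}_{\Lambda} u\|^2 = \|u''\|^2 = \sum_n \|u''\|_{L^2(I_n)}^2$. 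Note that $\|u'\|$ must be read as the piecewise (a.e.) derivative norm: because the matching conditions $F(x_n+)=\Lambda F(x_n-)$ generically force jumps of $u$ and $u'$ at the $x_n$, the function is not in $H^1(\mathbb{R})$ globally, so the inequality is genuinely a statement that the graph norm of $\opr{h}_{\Lambda}$ dominates this piecewise seminorm.

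First I would establish the local inequality: there is a constant $C_0>0$, depending only on $d$, such that
$$\|u'\|_{L^2(I)}^2 \le C_0\left(\|u''\|_{L^2(I)}^2 + \|u\|_{L^2(I)}^2\right)$$
for every $u \in \sobsp^{2,\,2}(I)$ on an interval $I$ of length $d$. Crucially, $C_0$ is the same for all cells, since they are mutually congruent translates and the inequality is translation invariant. One concrete route is to split $u'$ into its mean $c = d^{-1}\int_I u'$ and a mean-zero remainder, bound the remainder by Poincar\'e's inequality ($\le (d/\pi)\|u''\|_{L^2(I)}$), and bound $|c|$ through the endpoint values of $u$ via the elementary trace estimate $|u(x)|^2 \le d^{-1}\|u\|_{L^2(I)}^2 + 2\|u\|_{L^2(I)}\|u'\|_{L^2(I)}$; the resulting cross term $\|u\|_{L^2(I)}\|u'\|_{L^2(I)}$ is then absorbed into the left-hand side by Young's inequality. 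Alternatively the same bound follows from a soft compactness argument using the compact embedding $H^2(I)\hookrightarrow H^1(I)$, since the only $u$ with $u''=u=0$ is $u\equiv 0$.

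Finally I would sum the local inequality over $n\in\mathbb{Z}$. Adding the inequalities on all $I_n$ and using the norm decompositions above gives
$$\|u'\|^2 \le C_0\left(\|u''\|^2 + \|u\|^2\right) = C_0\left(\|\opr{h}_{\Lambda} u\|^2 + \|u\|^2\right) \le C_0\left(\|\opr{h}_{\Lambda} u\| + \|u\|\right)^2,$$
and taking square roots yields (\ref{eq:1}) with $C = \sqrt{C_0}$; as a byproduct, finiteness of the right-hand side shows $u' \in L^2(\mathbb{R})$.

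The main obstacle is not any single estimate but ensuring that the local constant is genuinely uniform across the infinitely many cells; this is exactly where the equidistant spacing $x_n=nd+d/2$ enters, guaranteeing that all cells share the common length $d$, so that a single translation-invariant constant controls them simultaneously. The matching data $\Lambda$ play no role in the argument — they only determine which piecewise $\sobsp^{2,\,2}$ functions belong to $D(\opr{h}_{\Lambda})$, while the bound holds for every such function.
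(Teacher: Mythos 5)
Your proof follows essentially the same route as the paper: the paper also decomposes $\mathbb{R}$ into the cells $(x_n,\,x_{n+1})$, applies on each the interpolation inequality $\int |u'|^2 \leq C'\big(\int |u''|^2 + \int |u|^2\big)$ (citing Redheffer for it, where you instead prove it directly via Poincar\'e, a trace bound, and Young, or by compactness), and sums over the cells. Your argument is correct, and making the uniformity of the local constant across congruent cells explicit is a fair point the paper leaves implicit.
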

\begin{proof}
We employ Redheffer's inequality \cite{Redheffer} which states
that
$$ \int\limits_a^b |u'(x)|^2 \,\d x \leq C'\left(
\int\limits_a^b |u''(x)|^2 \,\d x + \int\limits_a^b |u(x)|^2 \,\d x\right) $$
holds for any $u$ twice differentiable in an interval $[a,\,b]$
and some $C'>0$; then we get an inequality similar to (\ref{eq:1})
for the squares of the norms by summing up these inequalities with
$a=x_n,\,b=x_{n+1}$, and the sought result with $C=2C'$ follows
easily.
\end{proof}

\begin{lemma}
The set of functions from $D(\opr{h}_\Lambda)$ with a compact
support is a core of the operator $\opr{h}_\Lambda$.
\label{def:core}
\end{lemma}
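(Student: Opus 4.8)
The plan is to show that the compactly supported functions in $D(\opr{h}_\Lambda)$ form a core by exhibiting, for an arbitrary $u \in D(\opr{h}_\Lambda)$, a sequence of compactly supported domain elements converging to $u$ in the graph norm $\|u\| + \|\opr{h}_\Lambda u\|$. The natural device is a smooth cutoff: I would fix a function $\varphi \in C^\infty(\mathbb{R})$ with $\varphi(x)=1$ for $|x|\leq 1$, $\varphi(x)=0$ for $|x|\geq 2$, set $\varphi_m(x):=\varphi(x/m)$, and define $u_m:=\varphi_m u$. One checks immediately that $u_m$ has compact support and lies in $D(\opr{h}_\Lambda)$: the cutoff is smooth and constant near every interaction site $x_n$ once $m$ is large (so it does not interfere with the jump conditions $F(x_n+)=\Lambda F(x_n-)$, since multiplication by a function that is $C^1$ and equal on both sides preserves those matching relations), and the Sobolev regularity on each complementary interval is plainly retained.

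It then remains to verify graph-norm convergence $u_m \to u$. Convergence $\|u_m - u\| \to 0$ in $L^2$ is immediate from dominated convergence, since $\varphi_m \to 1$ pointwise and $|\varphi_m|\leq 1$. For the operator part I would compute, on each interval away from the $x_n$,
\begin{equation}
\opr{h}_\Lambda u_m = -(\varphi_m u)'' = \varphi_m (\opr{h}_\Lambda u) - 2\varphi_m' u' - \varphi_m'' u ,
\label{eq:leibniz}
\end{equation}
so that
\begin{equation}
\opr{h}_\Lambda u_m - \varphi_m\,\opr{h}_\Lambda u = -2\varphi_m' u' - \varphi_m'' u .
\label{eq:remainder}
\end{equation}
The first term on the left converges to $\opr{h}_\Lambda u$ in $L^2$ by dominated convergence again, so everything reduces to showing that the right-hand side of \eqref{eq:remainder} tends to zero in $L^2$. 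By construction $\|\varphi_m'\|_\infty = \O(m^{-1})$ and $\|\varphi_m''\|_\infty = \O(m^{-2})$, and both derivatives are supported in the annular region $m \leq |x| \leq 2m$.

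The key point is that $u'$ is square-integrable globally, which is exactly what Lemma \ref{def:OdhadDer} guarantees: since $u\in D(\opr{h}_\Lambda)$ we have $\|u'\|\leq C(\|\opr{h}_\Lambda u\|+\|u\|)<\infty$. Hence $\|\varphi_m'' u\| \leq \|\varphi_m''\|_\infty \|u\| = \O(m^{-2}) \to 0$ and $\|2\varphi_m' u'\|\leq 2\|\varphi_m'\|_\infty \|u'\| = \O(m^{-1})\to 0$; one may even sharpen the latter by noting that $u'$ restricted to $\{|x|\geq m\}$ has vanishing $L^2$ norm as $m\to\infty$, but the crude bound already suffices. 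Combining the two convergences gives $\|\opr{h}_\Lambda u_m - \opr{h}_\Lambda u\|\to 0$, completing the proof. The only step requiring genuine care is the first one — confirming that multiplication by the cutoff keeps $u_m$ inside the domain, i.e. that the boundary conditions at the $x_n$ survive — and this is precisely why I arrange the cutoff to be smooth and locally constant near the interaction sites; the availability of Lemma \ref{def:OdhadDer} is what makes the analytic estimate on \eqref{eq:remainder} routine.
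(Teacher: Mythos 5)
Your argument has a genuine gap, and it sits exactly at the step you yourself single out as the delicate one. For the scaled cutoff $\varphi_m(x)=\varphi(x/m)$, the region where $\varphi_m$ is non-constant is the annulus $m\le|x|\le 2m$, and for \emph{every} $m$ this annulus contains of order $m/d$ interaction sites $x_n=nd+d/2$. So the assertion that ``the cutoff is constant near every interaction site once $m$ is large'' is false, and your fallback claim --- that multiplication by a $C^1$ function with equal one-sided values preserves the matching relations --- fails for the generalized point interaction, because the condition $F(x_n+)=\Lambda F(x_n-)$ couples $f$ and $f'$. Concretely (take $\chi=0$), writing $F_{u_m}(x_n\pm)=M\,F_u(x_n\pm)$ with $M=\varphi_m(x_n)I+\varphi_m'(x_n)\left(\begin{smallmatrix}0&0\\1&0\end{smallmatrix}\right)$, membership of $u_m$ in $D(\opr{h}_\Lambda)$ requires $[M,\Lambda]F_u(x_n-)=0$, and one computes $[M,\Lambda]=\varphi_m'(x_n)\left(\begin{smallmatrix}-\beta&0\\ \gamma-\delta&\beta\end{smallmatrix}\right)$. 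This vanishes identically only if $\beta=0$ and $\gamma=\delta$, i.e.\ for $\delta$-type couplings; in the generic case $\beta\ne0$ --- the case of main interest in this paper --- the first component of the violated condition is $u_m(x_n+)-\gamma u_m(x_n-)-\beta u_m'(x_n-)=-\beta\,\varphi_m'(x_n)\,u(x_n-)$, which is nonzero whenever $\varphi_m'(x_n)u(x_n-)\ne0$. Hence in general $u_m\notin D(\opr{h}_\Lambda)$ and the graph-norm computation never gets off the ground.

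The strategy is salvageable: replace $\varphi(x/m)$ by a cutoff that is genuinely constant on a fixed neighbourhood of each site, say on each interval $[x_n-d/4,\,x_n+d/4]$, interpolating smoothly in between, with $\varphi_m'$ and $\varphi_m''$ uniformly bounded (even $\O(m^{-1})$) and supported in $\{|x|\ge m\}$. You then lose the $\O(m^{-2})$ bound on $\varphi_m''$, but you never needed it: $\|\varphi_m''u\|\le\|\varphi_m''\|_\infty\|u\|_{L^2(\{|x|\ge m\})}\to0$ and $\|\varphi_m'u'\|\le\|\varphi_m'\|_\infty\|u'\|_{L^2(\{|x|\ge m\})}\to0$, the latter using $u'\in L^2(\mathbb{R})$ from Lemma~\ref{def:OdhadDer}, as you correctly anticipated. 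For comparison, the paper avoids the commutation problem altogether: it truncates \emph{at} an interaction point $x_n$ and on the adjacent interval $(x_n,x_n+d)$ replaces $u$ by $u(x_n+)\phi_1+u'(x_n+)\phi_2$ with fixed smooth $\phi_i$ matching the boundary data at $x_n$ and vanishing together with their derivatives at $x_n+d$, so the boundary conditions at $x_n$ hold by construction and at the next site trivially; the graph-norm error is controlled via the Sobolev embedding by the tail of $\int(|u|^2+|u''|^2)$, which makes Lemma~\ref{def:OdhadDer} unnecessary for this step and works uniformly for all transfer matrices $\Lambda$.
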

 \begin{proof}
To a given $u \in D(\opr{h}_\Lambda)$ and $\varepsilon >0$ we will
construct an approximation function $u_\varepsilon \in
D(\opr{h}_\Lambda)$ which is compactly supported to the right,
i.e. it satisfies $\sup\, \mathrm{supp\,} u_\varepsilon<\infty$,
and
 $$
\int_{\mathbb{R}} \big(|u - u_\varepsilon|^2 + |u'' -
u_\varepsilon''|^2\big)(t)\, \d t \leq \varepsilon\,.
$$
Given $x\in\mathbb{R}$ and $d>0$ we can employ for a function $v
\in \sobsp^{2,2}(x,x+d)$ the Sobolev embedding,
$$ |v(x)|^2 + |v'(x)|^2 \leq
\sup\limits_{t \in [x,\,x+d]}|v(t)|^2 + |v'(t)|^2 \leq
C_1\int_x^{x+d}(|v|^2 + |v''|^2)(t)\, \d t$$
with a constant $C_1$ which depends on $d$ but not on $x$. Let us
take next a pair of functions, $\phi_i \in C^\infty(0,d),\,i=1,2$,
such that they satisfy $\phi_1(0) = \phi_2'(0) =1$ and
$\phi_1'(0)=\phi_2(0)=\phi_i(d)=\phi_i'(d)=0$. Denote by $M_i$ the
maximum of $|\phi_i(t)|^2 + |\phi_i''(t)|^2$ and put
$M:=\max\{M_1,\,M_2\}$; then it holds
$$
\int_0^{d}(|a\phi_1 + b\phi_2|^2 + |a\phi_1''+b\phi_2''|^2)(t)\,
\d t \leq 2Md(a^2 + b^2).
$$
In view of the  assumption made about the function $u$ we can find
$n$ such that $\int_{x_n}^\infty(|u|^2+|u''|^2)(t) \d t \leq
\tilde\varepsilon := \varepsilon/(2 + 8MdC_1)$ and define
$$
u_\varepsilon(x) := \left\{ \begin{array}{lcr}
                        u(x) & \;\;\mathrm{if}\;\; & x \le x_n \\
                        u(x_n+)\phi_1(x) + u'(x_n+)\phi_2(x) &
                        \;\;\mathrm{if}\;\;
                        & x \in (x_n,\,x_n+d) \\
                        0 & \;\;\mathrm{if}\;\; & x \ge x_n +d
                        \end{array} \right.
$$
Then $u_\varepsilon$ belongs to $ D(\opr{h}_\Lambda) $ being
compactly supported to the right and
\begin{multline}
\int_{\mathbb{R}} (|u-u_\varepsilon|^2 + |u'' -
u_\varepsilon''|^2)(t) \,\d t \leq 2\int_{x_n}^\infty (|u|^2
+|u''|^2+|u_\varepsilon|^2 + |u_\varepsilon''|^2)(t) \,\d t \\
\leq 2\int_{x_n}^\infty (|u|^2 +|u''|^2)(t)\, \d t + 8Md
(|u(x_n)|^2 + |u'(x_n)|^2) \\ \leq (2 + 8Md C_1)\int_{x_n}^\infty
(|u|^2 +|u''|^2)(t) \, \d t \leq (2 + 8MdC_1) \tilde\varepsilon =
\varepsilon\,. \nonumber
\end{multline}
Furthermore, one can take this function $u_\varepsilon$ and
perform on it the analogous construction to get the support
compact on the left, arriving in this way at a compactly supported
$\tilde u_\varepsilon$ such that
 $$
\int_{\mathbb{R}} \big(|u - \tilde u_\varepsilon|^2 + |u'' -
\tilde u_\varepsilon''|^2\big)(t)\, \d t \leq 2\varepsilon\,,
$$
and since $\varepsilon$ was arbitrary by assumption the lemma is
proved.
  \end{proof}

The last one is a simple observation, which is however the main
tool for conversion of the proofs in the regular case to their
singular counterparts.

\begin{lemma}
  Let $u,\,v \in D(\opr{h}_\Lambda)$, then the Wronskian
  \begin{equation}
    W[\bar{u},\,v](x) := \bar{u}(x) v'(x) - \bar{u}'(x) v(x)
    \label{eq:Wrons}
  \end{equation}
  is a continuous function of $x$ on the whole real axis.
  \label{def:Wrons}
\end{lemma}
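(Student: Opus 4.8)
The plan is to reduce everything to the matching at a single interaction point, since away from the lattice the statement is automatic. On each open interval $(x_n,x_{n+1})$ the functions $u,v$ belong to $\sobsp^{2,2}$, so by the one-dimensional Sobolev embedding $u,u',v,v'$ are continuous there and possess one-sided limits at the endpoints. Consequently $W[\bar u,v]$ is continuous on $\mathbb{R}\setminus\cup_n\{x_n\}$, and the whole question reduces to whether $W[\bar u,v](x_n+)=W[\bar u,v](x_n-)$ holds at each $x_n$.

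To settle this I would rewrite the Wronskian as a symplectic bilinear form. With $U:=(u,u')^{\mathsf{T}}$, $V:=(v,v')^{\mathsf{T}}$ and $J:=\begin{pmatrix}0&1\\-1&0\end{pmatrix}$ one checks directly that $W[\bar u,v]=\overline{U}^{\mathsf{T}}JV$. The boundary conditions are $U(x_n+)=\Lambda U(x_n-)$ and $V(x_n+)=\Lambda V(x_n-)$, so passing to the right limit gives $W[\bar u,v](x_n+)=\overline{U(x_n-)}^{\mathsf{T}}\,\overline{\Lambda}^{\mathsf{T}}J\Lambda\,V(x_n-)$. Everything now hinges on the identity $\overline{\Lambda}^{\mathsf{T}}J\Lambda=J$.

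This identity is where the normalization of the interaction enters. Writing $\Lambda=e^{i\chi}M$ with $M=\begin{pmatrix}\gamma&\beta\\\alpha&\delta\end{pmatrix}$ real, the conjugate phase carried by $\overline{U}$ cancels the phase in $V$, so that $\overline{\Lambda}^{\mathsf{T}}J\Lambda=M^{\mathsf{T}}JM$; a direct $2\times2$ computation then yields $M^{\mathsf{T}}JM=(\det M)\,J$ with $\det M=\gamma\delta-\alpha\beta$. The defining constraint $\alpha\beta-\gamma\delta=-1$ forces $\det M=1$, whence $\overline{\Lambda}^{\mathsf{T}}J\Lambda=J$ and therefore $W[\bar u,v](x_n+)=W[\bar u,v](x_n-)$, which is the desired continuity.

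I do not anticipate a genuine obstacle here: the whole content is the observation that the admissible transfer matrices preserve the symplectic form on $\mathbb{C}^2$, which is exactly what the condition $\alpha\beta-\gamma\delta=-1$ encodes. The only point deserving a little care is the cancellation of the phase $e^{i\chi}$, which shows the continuity of the Wronskian for the full four-parameter family and not merely in the gauge $\chi=0$ used for the one-dimensional operator.
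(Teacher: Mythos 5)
Your proof is correct and essentially the same as the paper's: the paper writes $W[\bar u,v]=iU^*\sigma_2 V$ and invokes the identity $\Lambda^*\sigma_2\Lambda=\sigma_2$ (citing Cheon--Shigehara), which is exactly your $\overline{\Lambda}^{\mathsf{T}}J\Lambda=J$ since $J=i\sigma_2$. The only difference is cosmetic --- you verify the symplectic identity by direct $2\times2$ computation (and make the cancellation of $e^{i\chi}$ explicit) where the paper cites it from the literature.
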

\begin{proof}
The condition $\alpha \beta - \gamma \delta = -1$ for the transfer
matrix $\Lambda$ is equivalent to $\Lambda^* \sigma_2 \Lambda =
\sigma_2$, where $\sigma_2$ is the second Pauli
matrix\cite{Cheon}. Then we have
\begin{eqnarray*}
\lefteqn{W[\bar{u},\,v](x_n+) = i U^{*}(x_n+) \sigma_2 V(x_n+)  =
i(\Lambda U(x_n-))^* \sigma_2 \Lambda V(x_n-)} \\ && \phantom{AAA}
= iU^*(x_n-) \sigma_2 V(x_n-) = W[\bar{u},\,v](x_n-)\,,
\phantom{AAAAAAAAAA}
\end{eqnarray*}
which concludes the proof.
\end{proof}
The way in which we are going to employ this result is the
following. Suppose we have \emph{real-valued} functions
$u_0,\,v_0,\,u$ which are $\sobsp^{2,\,2}$ away from the points
$x_n$ and satisfy the boundary conditions (\ref{eq:BoundCon}) at
them. Let, in addition, $W[u_0,\,v_0]$ be nonzero -- in the
applications below this will be true as $u_0,\,v_0$ will be
linearly independent generalized eigenfunctions of
$\opr{h}_\Lambda$ -- then by the lemma the vector function
\begin{equation}
  y = \left[ \begin{array}{cc}
                u_0 & v_0 \\
                u_0'& v_0'
             \end{array}\right]^{-1}
             \left(\begin{array}{c}
                u \\
                u'
             \end{array}\right) = W[u_0,\,v_0]^{-1}
             \left(\begin{array}{c}
                v_0'u - v_0 u' \\
                -u_0'u + u_0 u'
             \end{array}\right)
             \label{eq:Wcon}
\end{equation}
is continuous everywhere including the points $x_n$.

\setcounter{equation}{0}
\section{The essential spectrum}
\label{sec:ess_spectrum}

Now we are going to demonstrate the spectral properties of
$\opr{H}_\Lambda$ announced in the introduction. We follow the
ideology used in the regular case~\cite{Hempel1, Hempel2},
localizing first the essential spectrum and finding afterwards the
subsets where it is absolutely continuous. In view of the partial
wave decomposition (\ref{eq:opr}) it is natural to start with the
partial wave operators $\opr{H}_l$.

The essential spectrum is stable under a rank one perturbation,
hence adding the Dirichlet boundary condition at a point $a>0$ to
each of the operators $\opr{H}_l,\,\opr{h}_\Lambda$ we do not
change their essential spectrum. Moreover, multiplication by $C
x^{-2}$ is a relatively compact operator on $L^2(a,\infty)$, thus
the essential spectra of the said operators coincide,
\begin{equation}
\sigma_{ess}(\opr{H}_l) = \sigma_{ess}(\opr{h}_\Lambda)
\label{eq:EssSpectra}.
\end{equation}
With this prerequisite we can pass to our first main result.
\begin{theorem}{The essential spectrum of the operator
(\ref{eq:opr}) is equal to}
\begin{equation}
\sigma_{ess}(\opr{H}_\Lambda) = [\inf
\sigma_{ess}(\opr{h}_\Lambda),\infty)\,.\label{eq:3.2.1}
\end{equation}
\label{def:ess}
\end{theorem}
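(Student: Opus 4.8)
The plan is to prove the two inclusions separately, the inclusion $[\inf\sigma_{ess}(\opr{h}_\Lambda),\infty)\subseteq\sigma_{ess}(\opr{H}_\Lambda)$ being the substantial one. Write $E_0:=\inf\sigma_{ess}(\opr{h}_\Lambda)$ and use the partial-wave decomposition together with the identity $\sigma_{ess}(\opr{H}_l)=\sigma_{ess}(\opr{h}_\Lambda)$ of \eqref{eq:EssSpectra}. The organizing principle is the standard description of the essential spectrum of an orthogonal sum $\bigoplus_l A_l$: a point $\lambda$ belongs to it precisely when either $\lambda\in\sigma_{ess}(A_l)$ for some $l$, or there are eigenvalues $\lambda_{l_j}\in\sigma(A_{l_j})$ with pairwise distinct indices $l_j$ and $\lambda_{l_j}\to\lambda$; in the second situation the normalized eigenvectors, living in mutually orthogonal subspaces, already form a singular Weyl sequence.

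For the inclusion $\sigma_{ess}(\opr{H}_\Lambda)\subseteq[E_0,\infty)$ I would argue that no block contributes essential spectrum below $E_0$. Since the centrifugal coefficient $\frac{(\nu-1)(\nu-3)}{4}+l(l+\nu-2)$ is nonnegative for every $l\geq1$ (and for $l=0$ as well when $\nu\geq3$), dropping this term in the quadratic form shows that $\opr{H}_l$ is bounded below by the half-line realization of $\opr{h}_\Lambda$, whose infimum equals $E_0$; hence these waves have no spectrum below $E_0$. The sole exception is $\nu=2,\,l=0$, where the attractive tail $-\frac{1}{4r^2}$ generates the eigenvalues below the lowest band, but they originate in a single summand and accumulate only at $E_0$, so they cannot produce the cross-block accumulation demanded above. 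Therefore $\sigma_{ess}(\opr{H}_\Lambda)\cap(-\infty,E_0)=\emptyset$, and since $E_0\in\sigma_{ess}(\opr{H}_l)$ for each $l$ the lower edge is exactly $E_0$.

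The inclusion $[E_0,\infty)\subseteq\sigma_{ess}(\opr{H}_\Lambda)$ splits into a band part and a gap part. If $E$ lies in a band of $\opr{h}_\Lambda$ then $E\in\sigma_{ess}(\opr{H}_l)$ for any fixed $l$ by \eqref{eq:EssSpectra} and we are done. The interesting case is $E$ in a gap (or above the top band, if $\opr{h}_\Lambda$ has only finitely many bands). Here I would fix an energy $E_\ast$ in some band below $E$ and exploit that the centrifugal term acts as a slowly varying energy shift: at radius $r$ the radial equation sees the periodic interaction at the local energy $E-\frac{c_l}{r^2}$, where $c_l$ is the centrifugal coefficient. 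Choosing $\rho_l$ so that $E-\frac{c_l}{\rho_l^2}=E_\ast$, that is $\rho_l=\sqrt{c_l/(E-E_\ast)}\sim l/\sqrt{E-E_\ast}\to\infty$, the shifted energy equals $E_\ast$ and thus falls into a band exactly at $r=\rho_l$. Around $\rho_l$ I would build a quasimode from the real Bloch generalized eigenfunctions $u_0,v_0$ of $\opr{h}_\Lambda$ at $E_\ast$, modulated by a smooth envelope of width $w_l$ with $1\ll w_l\ll\sqrt{c_l}$ (say $w_l=\sqrt{l}$), placed in the $l$-th angular sector. Distinct $l$ live in orthogonal angular subspaces, so after normalization the family $\{u_l\}$ is orthonormal and is a candidate Weyl sequence at $E$.

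The main obstacle is to keep these quasimodes \emph{admissible}, i.e. to preserve the singular conditions \eqref{eq:BoundCon} at every sphere $R_n$ in the support; a naive cut-off $\chi_l u_0$ violates them, because $u_0$ itself jumps across the interaction points. This is exactly where Lemma~\ref{def:Wrons} and the substitution \eqref{eq:Wcon} enter: if one prescribes the envelope in the Wronskian frame $y$ and then reconstructs $(u_l,u_l')^{\mathsf T}=\left[\begin{smallmatrix}u_0&v_0\\u_0'&v_0'\end{smallmatrix}\right]y$, the resulting $u_l$ satisfies the boundary conditions automatically, since the matrix built from $u_0,v_0$ intertwines the two sides of each $R_n$ through $\Lambda$. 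Carried out as a variation-of-parameters modulation this yields genuinely admissible, compactly supported $u_l\in D(\opr{H}_l)$, which are legitimate for spectral purposes by Lemma~\ref{def:core}. It remains to bound the residual $(\opr{H}_l-E)u_l$, which has two sources: the deviation $\bigl(\frac{c_l}{r^2}-(E-E_\ast)\bigr)u_l$ of the centrifugal term from its value at $\rho_l$, of size $\O(w_l\,c_l\rho_l^{-3})=\O(w_l/l)$ over the window, and the envelope derivatives, of size $\O(w_l^{-1})$; with Lemma~\ref{def:OdhadDer} controlling the first-derivative cross terms, both are $o(\|u_l\|)$ as soon as $1\ll w_l\ll l$. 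Hence $\|(\opr{H}_l-E)u_l\|/\|u_l\|\to0$, the orthonormal family $\{u_l\}$ is a singular Weyl sequence, and $E\in\sigma_{ess}(\opr{H}_\Lambda)$, which closes the gap part and completes the proof.
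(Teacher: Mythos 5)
Your argument is correct in outline, but the route you take through the gap energies is genuinely different from the paper's, and it is worth seeing what each buys. The paper makes no band/gap distinction and never cuts off Bloch functions: it writes $E\geq E_0$ as $\lambda_0+\lambda$ with $\lambda_0\in\sigma_{ess}(\opr{h}_\Lambda)$ and $\lambda>0$, takes a \emph{compactly supported} $u\in D(\opr{h}_\Lambda)$ with $\|(\opr{h}_\Lambda-\lambda_0)u\|$ small --- such $u$ exists precisely because of Lemma \ref{def:core}, which is the whole point of that lemma --- translates it by a multiple of the period $d$ out to $[R,\,R+L]$, where admissibility is automatic since lattice translations preserve the conditions (\ref{eq:BoundCon}), and then tunes the partial-wave index $l_R$ jointly with $R$ so that the centrifugal term equals $\lambda+\O(R^{-1})$ uniformly on the support; letting $\varepsilon\to0$ through different $l_R$ produces the singular sequence, and closedness of the essential spectrum fills in the edges. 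You instead fix a band energy $E_\ast$ and localize Bloch quasimodes at radius $\rho_l\sim l$ with window $w_l$, $1\ll w_l\ll l$; your error budget $\O(w_l/l)+\O(w_l^{-1})$ is right and the cross-wave orthogonality does give a singular Weyl sequence, so this works --- but the cost of your route is exactly the admissibility problem you identified, and your fix as literally stated needs a caveat: you cannot prescribe both components of $y$ freely and set $(u_l,\,u_l')^{\mathsf T}=Yy$, because the second row of $Yy$ is the derivative of the first only under the variation-of-parameters constraint $u_0y_1'+v_0y_2'=0$ (for $\beta\neq0$ this must hold at least at every $R_n$ in the support, and solving it globally runs into the zeros of $u_0,\,v_0$). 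A cheaper repair is available: since the spheres are separated by $d$, choose the envelope locally constant in a neighbourhood of each $R_n$, with transitions confined to cell interiors; then the naive cutoff $\chi u_0$ already satisfies (\ref{eq:BoundCon}) and no Wronskian-frame reconstruction is needed. The paper's two free parameters $(\lambda_0,\lambda)$ buy the complete avoidance of this issue at the price of invoking Lemma \ref{def:core}; your scheme buys a direct, quantitative quasimode at the given energy and handles band points trivially via (\ref{eq:EssSpectra}). Your lower-bound half --- minimax for the nonnegative centrifugal terms plus the observation that the exceptional $\nu=2,\,l=0$ block has only discrete spectrum below $E_0$ by (\ref{eq:EssSpectra}), so a single summand cannot create cross-block accumulation --- is the same in content as the paper's first proposition, merely packaged through the direct-sum description of the essential spectrum.
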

The idea of the proof is the same as in \cite{Hempel1}: first we
check that $\inf \sigma_{ess}(\opr{H}_\Lambda)$ cannot be smaller
then $\inf \sigma_{ess}(\opr{h}_\Lambda)$, after that we show that
$\sigma_{ess}(\opr{H}_\Lambda)$ contains the whole interval $[\inf
\sigma_{ess}(\opr{h}_\Lambda),\infty)$.
\begin{proposition}{Under the assumptions stated we have}
\begin{equation}
\inf \sigma_{ess}(\opr{H}_\Lambda) \geq \inf
\sigma_{ess}(\opr{h}_\Lambda)\,. \label{def:p1}
\end{equation}
\end{proposition}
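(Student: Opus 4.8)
The plan is to use the partial-wave decomposition (\ref{eq:opr}) together with the identity (\ref{eq:EssSpectra}). Write $\opr{H}_l:=\opr{H}_{\Lambda,l}$ and set $E_0:=\inf\sigma_{ess}(\opr{h}_\Lambda)$; recall that the full-line periodic operator $\opr{h}_\Lambda$ has band spectrum and hence no spectrum below $E_0$, so that $\opr{h}_\Lambda\ge E_0$ holds in the form sense. Because $\opr{H}_\Lambda=\bigoplus_l\mathsf{U}^{-1}\opr{H}_l\mathsf{U}\otimes\opr{I}_l$ and $\sigma_{ess}(\opr{H}_l)=\sigma_{ess}(\opr{h}_\Lambda)\subset[E_0,\infty)$ for every $l$ by (\ref{eq:EssSpectra}), a point $\lambda<E_0$ can enter $\sigma_{ess}(\opr{H}_\Lambda)$ only through accumulation of channel eigenvalues: if $\lambda\in\sigma_{ess}(\opr{H}_\Lambda)$ then for each small $\varepsilon>0$ infinitely many indices $l$ must have an eigenvalue in $(\lambda-\varepsilon,\lambda+\varepsilon)$. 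Consequently it is enough to show that for every fixed $\lambda'<E_0$ only finitely many operators $\opr{H}_l$ have any spectrum below $\lambda'$; the remaining finite collection then contributes only isolated eigenvalues of finite multiplicity in $(-\infty,\lambda']$, whence no essential spectrum there and (\ref{def:p1}) follows.

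To obtain the finiteness I would bound $\inf\sigma(\opr{H}_l)$ from below uniformly for large $l$ by an IMS-type localization (a partition-of-unity bracketing). The decisive structural fact, visible in (\ref{eq:ParOpr}), is that the centrifugal coefficient $c_l:=\frac{(\nu-1)(\nu-3)}{4}+l(l+\nu-2)$ is nonnegative for all large $l$ and diverges as $l\to\infty$. I would choose a smooth partition of unity $j_0^2+j_1^2=1$ with $j_0$ supported in $(0,2R)$ and $j_1=1$ on $(2R,\infty)$, where $R$ is fixed so large that the localization error $\||j_0'|^2+|j_1'|^2\|_\infty=\O(R^{-2})$ is smaller than $E_0-\lambda'$. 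For normalized $\psi\in D(\opr{H}_l)$ the localization formula gives
\begin{equation}
\langle\psi,\opr{H}_l\psi\rangle\ge\langle j_0\psi,\opr{H}_l j_0\psi\rangle+\langle j_1\psi,\opr{H}_l j_1\psi\rangle-\O(R^{-2})\,. \label{eq:ims_plan}
\end{equation}
On $\supp j_0\subset(0,2R)$ the barrier satisfies $c_l/r^2\ge c_l/(4R^2)\to\infty$, so the first term dominates the finitely many (possibly attractive) interactions inside $(0,2R)$ and exceeds $E_0\|j_0\psi\|^2$ once $l$ is large. For the second term I would drop the nonnegative centrifugal contribution and regard $j_1\psi$, extended by zero, as an element of the form domain of the full-line operator $\opr{h}_\Lambda$, which yields $\langle j_1\psi,\opr{H}_l j_1\psi\rangle\ge E_0\|j_1\psi\|^2$. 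Summing and using $\|j_0\psi\|^2+\|j_1\psi\|^2=1$ turns (\ref{eq:ims_plan}) into $\langle\psi,\opr{H}_l\psi\rangle\ge E_0-\O(R^{-2})>\lambda'$ for all large $l$, which is precisely the required finiteness.

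The main obstacle is technical and lies in carrying the cut-offs through the singular interactions. Multiplication by a smooth $j_i$ preserves the boundary conditions (\ref{eq:BoundCon}) only at points where $j_i'$ vanishes, so I would build the partition of unity to be locally constant in a neighbourhood of every sphere $R_n$, letting it change only in the gaps between consecutive spheres within the transition layer; this keeps $j_i\psi\in D(\opr{H}_l)$ while preserving the bound $\||j_i'|\|_\infty=\O(R^{-1})$. The conceptually important point is the treatment of the outer region: it must be compared with the \emph{full-line} operator $\opr{h}_\Lambda\ge E_0$ rather than with a half-line restriction, since a hard Neumann or Dirichlet decoupling at a fixed radius could generate spurious edge states below $E_0$ and ruin the lower bound; the smooth cut-off $j_1$, vanishing together with its derivative before reaching the left end of its support, precisely avoids this by placing $j_1\psi$ in the form domain of $\opr{h}_\Lambda$. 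Once these two points are verified the estimate (\ref{eq:ims_plan}) closes the argument.
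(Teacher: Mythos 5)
Your argument is correct, but it takes a genuinely different and noticeably heavier route than the paper's. The paper exploits the fact that, with the sole exception of the channel $\nu=2,\,l=0$, the centrifugal coefficient $c_l=\frac{(\nu-1)(\nu-3)}{4}+l(l+\nu-2)$ in (\ref{eq:ParOpr}) is nonnegative on the \emph{whole} half-line for \emph{every} $l$, so the mini-max principle yields $\inf\sigma(\opr{H}_{\Lambda,l})\geq\inf\sigma(\opr{h}_\Lambda)$ directly for each such channel --- no large-$l$ asymptotics, no localization --- and $\inf\sigma_{ess}(\opr{H}_\Lambda)\geq\inf_l\inf\sigma(\opr{H}_{\Lambda,l})$ finishes the job; the exceptional two-dimensional $s$-wave is handled exactly as in your ``remaining finite collection'' step, via the stability relation (\ref{eq:EssSpectra}), which confines its spectrum below $E_0$ to discrete eigenvalues of finite multiplicity (these are precisely the Welsh eigenvalues of Section~\ref{sec:disc_spectrum}, so one cannot do better there). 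You instead use only $c_l\to\infty$ near the origin and $c_l\geq 0$ outside, paying for it with the direct-sum accumulation argument plus an IMS bracketing. What your version buys: robustness (it would survive finitely many channels with attractive centrifugal terms, or a compactly supported modification of the interactions), and the fact that your $j_1\psi$ vanishes identically near $r=0$, so its extension by zero into the form domain of $\opr{h}_\Lambda$ is trivially legitimate, whereas the per-channel mini-max comparison implicitly needs that every form-domain function of $\opr{H}_{\Lambda,l}$ extends by zero across the origin. What the paper's version buys: brevity, and no need for the two technical facts on which you silently lean --- that the generalized point interaction (\ref{eq:BoundCon}) admits a quadratic form consisting of $\int|u'|^2$ plus boundary terms quadratic in $u(R_n\pm)$ that are infinitesimally form-bounded by the kinetic part (true for every admissible $\Lambda$; for $\beta\neq 0$ the boundary matrix is $\beta^{-1}\left(\begin{smallmatrix}\gamma & -1\\ -1 & \delta\end{smallmatrix}\right)$), and that the IMS identity survives the singular points, which your locally constant cut-offs near each sphere indeed guarantee. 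Both points are routine, so your proof is complete once they are written out; it is simply worth noticing that the global nonnegativity of $c_l$ makes the whole localization machinery unnecessary here.
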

\begin{proof}
If $\nu > 2$ we infer from equations (\ref{eq:EssSpectra}),
(\ref{eq:EssSpectrab}) that
$$
\inf \sigma_{ess}(\opr{H}_\Lambda) \geq \inf_l \inf
\sigma(\opr{H}_l) = \inf \sigma_{ess}(\opr{h}_\Lambda)\,;
$$
notice that with the exception  of the case $\nu=2,\,l=0$ the
centrifugal term in the partial waves operators (\ref{eq:ParOpr})
is strictly positive, and consequently, the mini-max principle
implies
\begin{equation}
\inf \sigma(\opr{H}_l) \geq \inf \sigma(\opr{h}_\Lambda) = \inf
\sigma_{ess}(\opr{H}_l) \geq \inf \sigma(\opr{H}_l).
\label{eq:EssSpectrab}
\end{equation}
For $\nu =2$ the argument works again, we have just to be a little
more cautious and consider in the first partial wave the infimum
over the essential spectrum only.
\end{proof}

\begin{proposition}
$\;\sigma_{ess}(\opr{H}_\Lambda) \supset [\inf
\sigma_{ess}(\opr{h}_\Lambda),\infty]$.
\end{proposition}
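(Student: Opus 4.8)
The plan is to establish the reverse inclusion by producing, for every $\lambda\ge\inf\sigma_{ess}(\opr{h}_\Lambda)=:E_0$, a singular Weyl sequence for $\opr{H}_\Lambda$ at the point $\lambda$; by Weyl's criterion this puts $\lambda$ into $\sigma_{ess}(\opr{H}_\Lambda)$, and since $\lambda$ is arbitrary the proposition follows. The construction rests on the asymptotic decoupling announced in the introduction: on a shell $A<r<B$ lying far from the origin the partial-wave operator (\ref{eq:ParOpr}) acts as $-\d^2/\d r^2+c_l r^{-2}$ with $c_l:=\frac{(\nu-1)(\nu-3)}{4}+l(l+\nu-2)$, and there the centrifugal term is nearly constant, so that $\opr{H}_l$ is a small perturbation of $\opr{h}_\Lambda$ shifted by $c_l/r_0^2$, where $r_0$ is the centre of the shell. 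I would first fix $t\in[0,\lambda-E_0]$ for which $\mu:=\lambda-t$ lies in the interior of a spectral band of $\opr{h}_\Lambda$; such $t$ exists because $E_0$ is a band edge, so the first band supplies points $\mu\in(E_0,\lambda]$. At such $\mu$ the equation $\opr{h}_\Lambda u=\mu u$ has two linearly independent bounded Floquet solutions $u_0,v_0$ with $W[u_0,v_0]\ne0$, both bounded away from zero in the mean. Since $c_l\sim l^2$ and $R_n\sim nd$, the numbers $c_l/R_n^2$ are dense in $[0,\infty)$, so one can choose $l_m,n_m\to\infty$ with centres $r_0^{(m)}:=R_{n_m}$, window widths $w_m\to\infty$ obeying $w_m/r_0^{(m)}\to0$, and $t_m:=c_{l_m}/(r_0^{(m)})^2\to t$.

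The delicate point, and the place where the singular case genuinely departs from the regular one, is to localize $u_0$ to the window $[A_m,B_m]$ with $B_m-A_m=w_m$ while staying inside $D(\opr{H}_{l_m})$: a naive smooth cut-off $\chi u_0$ fails to satisfy the boundary conditions (\ref{eq:BoundCon}) unless $\beta=0$, because multiplication does not commute with the transfer matrix $\Lambda$. Here I would invoke the Wronskian device of Lemma~\ref{def:Wrons}: writing a candidate $u=u_0y_1+v_0y_2$, formula (\ref{eq:Wcon}) shows that $u$ lies in the domain exactly when the coefficient vector $y=(y_1,y_2)$ is continuous and obeys the transversality relation $u_0y_1'+v_0y_2'=0$. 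On each of the two transition zones of fixed width $\ell$ (a few periods) at the ends of the window I would steer $y$ along a $C^\infty$ transversal path between $(0,0)$ and $(1,0)$; the constraint leaves one free function $g$ with $y_1'=v_0g$, $y_2'=-u_0g$, and the endpoint conditions $\int v_0g=\mp1$, $\int u_0g=0$ are solvable because $u_0,v_0$ are linearly independent. The resulting $u_m$ equals $u_0$ on the central part $[A_m+\ell,B_m-\ell]$, vanishes outside $[A_m,B_m]$, lies in $D(\opr{H}_{l_m})$, and satisfies $\|u_m\|^2\sim w_m\to\infty$. The functions $\psi_m:=\mathsf{U}^{-1}u_m\otimes Y_{l_m}$ with $Y_{l_m}\in S_{l_m}$ normalized are then mutually orthogonal (the $l_m$ being distinct) and, after normalization, orthonormal.

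It then remains to verify $\|(\opr{H}_{l_m}-\lambda)u_m\|/\|u_m\|\to0$, which I would split into three contributions. On the central part, using $-u_0''=\mu u_0$, the residual equals $(t_m-t)u_0+t_m\bigl((r_0^{(m)})^2r^{-2}-1\bigr)u_0$, whose coefficient is $o(1)$ uniformly because $t_m\to t$ and $|(r_0^{(m)})^2r^{-2}-1|\lesssim w_m/r_0^{(m)}\to0$; its $L^2$ norm is thus $o(\sqrt{w_m})$. On the two transition zones $u_m,u_m',u_m''$ and $c_{l_m}r^{-2}$ are all $\O(1)$ over a fixed length, so that contribution is $\O(1)$. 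Dividing the total $o(\sqrt{w_m})+\O(1)$ by $\|u_m\|\sim\sqrt{w_m}$ gives the claim, and by the partial-wave decomposition (\ref{eq:opr}) the same bound transfers to $(\opr{H}_\Lambda-\lambda)\psi_m$, so $(\psi_m)$ is the desired singular sequence. I expect the main obstacle to be precisely the middle step, namely forcing the localized radial profile to respect the generalized point-interaction conditions; the error estimate, though it requires the coordinated growth $1\ll w_m\ll r_0^{(m)}$, is otherwise routine.
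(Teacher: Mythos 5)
Your argument is correct, and it shares the paper's skeleton: a Weyl singular sequence living on distant shells, with the angular momentum tuned as $l\sim\sqrt{t}\,R$ so that the centrifugal term supplies the energy shift. Where you genuinely diverge is in how the radial profile is produced. The paper splits the target energy as $\lambda_0+\lambda$ with $\lambda_0\in\sigma_{ess}(\opr{h}_\Lambda)$ and $\lambda>0$, and gets the compactly supported profile for free from Lemma~\ref{def:core}: compactly supported functions form a core of $\opr{h}_\Lambda$, so there is an approximate eigenfunction $u$ at $\lambda_0$ with $\supp u\subset(0,L)$, and translating it by a multiple of the period $d$ places it in $D(\opr{H}_{l_R})$ with no truncation argument needed at all. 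You instead split $\lambda=\mu+t$ with $\mu$ in a band interior and truncate an exact Floquet solution by steering the variation-of-parameters coefficients via the Wronskian device (\ref{eq:Wcon}); this is sound --- your constraint $u_0y_1'+v_0y_2'=0$ with $y_1'=v_0g,\ y_2'=-u_0g$ does preserve the conditions (\ref{eq:BoundCon}) since only continuity of $y$ at the points $x_n$ is needed, and the moment conditions on $g$ are solvable because the weighted Gram matrix of $u_0,v_0$ over a transition zone is invertible. Your route buys an explicit quantitative picture ($\|u_m\|\sim\sqrt{w_m}$ against an $\O(1)$ edge cost), at the price of re-proving in situ the truncation that Lemma~\ref{def:core} encapsulates once and for all; note also that Lemma~\ref{def:core} works at \emph{every} point of $\sigma_{ess}(\opr{h}_\Lambda)$, band edges included, whereas you need $\mu$ strictly inside a band.

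Two small repairs. First, your claim that a naive cut-off $\chi u_0$ fails for $\beta\neq0$ is overstated: the mismatch at $x_n$ is proportional to $\chi'(x_n)$, so a smooth $\chi$ chosen locally constant near each interaction point (possible, as the points are spaced by $d$) respects (\ref{eq:BoundCon}) and would replace your steering construction by a one-line cut-off. Second, your construction misses the single point $\lambda=E_0$, where $(E_0,\lambda]$ is empty and no band-interior $\mu\leq\lambda$ with $t\geq0$ exists; this endpoint is recovered either from closedness of the essential spectrum or directly from (\ref{eq:EssSpectra}), which already yields $\sigma_{ess}(\opr{h}_\Lambda)=\sigma_{ess}(\opr{H}_l)\subset\sigma_{ess}(\opr{H}_\Lambda)$ --- the same observation that covers the case $\lambda=0$ left open by the paper's own restriction to strictly positive shifts.
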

\begin{proof}
The idea is to employ Weyl criterion \cite{Weid}. Let $\lambda_0
\in \sigma_{ess}(h_\Lambda)$ and $ \lambda > 0 $, then we have to
show that for every $\varepsilon >0$ there exists a function
$$
 \phi \in D(\opr{H}_{\Lambda})\quad \mbox{satisfying}\quad
\|(\opr{H}_{\Lambda} - \lambda_0 - \lambda)\phi\| \leq \varepsilon
\|\phi\| .
$$
Basic properties of the essential spectrum together with Lemma
\ref{def:core} provide us with a compactly supported $u \in
D(\opr{h}_\Lambda)$ such that $\|u''-\lambda_0u\| \leq \frac12
\varepsilon$. in view of the periodicity we may suppose that $\supp
u \subset (0,\,L)$. Next we are going to estimate $\lambda$ by the
repulsive centrifugal potential in a suitably chosen partial wave.
Putting $l_R := [\sqrt{\lambda R}\,]$ we have
$$
\frac{1}{r^2}\left[\frac{(\nu-1)(\nu-3)}{4} +
   l_R(l_R+\nu-2)\right] = \lambda +
   \O(R^{-1})\quad \mathrm{for} \quad r\in[R,\,R+L]
$$
as $R \to \infty$, hence choosing $R$ large enough one can achieve
that
$$
\sup_{r\in[R,\,R+L]}
\left|\frac{1}{r^2}\left[\frac{(\nu-1)(\nu-3)}{4} +
   l_R(l_R+\nu-2)\right] - \lambda\right| \leq
   \frac12 {\varepsilon}\,.
   $$
Next we employ the partial wave decomposition, considering a unit
vector $Y \in S_{l_{n_\varepsilon}}$ and putting
$\phi(x):=\opr{U}^{-1} u(|x|-R)Y(x/|x|) $. It holds obviously
$\phi \in D(\opr{H}_\Lambda),\:\|\phi\| =\|u(\cdot-R)\| $,
and\footnote{For simplicity we allow ourselves the licence to
write $\|f\|\equiv \|f(\cdot)\|=\|f(r)\|$ in the following
formula.}
\begin{eqnarray*}
\lefteqn{\|\opr{H}_\Lambda \phi - (\lambda_0 + \lambda) \phi\| =
\|\opr{H}_{l_R} u(r-R) - (\lambda_0 + \lambda)u(r-R)\|} \\ && \leq
\|u''(r-R) - \lambda_0 u(r-R)\|  \\ && +
\left\|\left(\frac{1}{r^2}\left[\frac{(\nu-1)(\nu-3)}{4} +
l_R(l_R+\nu-2)\right] - \lambda \right)u(r-R)\right\| \leq
\varepsilon \|\phi\|,
\end{eqnarray*}
which concludes the proof.
\end{proof}

Once the essential spectrum is localized, we can turn to its
continuous component. In view of the decomposition (\ref{eq:opr})
we have to describe the continuous spectrum in each partial wave
and the results for $\opr{H}_\Lambda$ will immediately follow;
recall that the essential spectrum of $\opr{H}_l$ consists of the
bands of the underlying one-dimensional operator
$\opr{h}_\Lambda$. Our strategy is to prove that the transfer
matrix --- defined in the appendix, Sec.~\ref{s:append} below
--- is bounded inside the bands, which implies that the spectrum
remains absolutely continuous \cite{GilPea,Simon}. The following
claim is a simple adaptation of the Lemma~2 from \cite{Hempel2} to
the singular case.

\begin{lemma}
Let $(a,\,b)$ be the interior of a band of the operator
$\opr{h}_\Lambda$ in $L_2(\mathbb{R})$. Let further $K \subset
(a,\,b)$ be a compact subinterval, $c \in \mathbb{R}$, and $x_0 >
0$. Then there is a number $C>0$ such that for every $\lambda \in
K$ any solution $u$ of
\begin{equation}
-u'' + \frac{c}{r^2}u = \lambda u, \quad u \in D(\opr{h}_\Lambda)
\label{eq:pur.1}
\end{equation}
with the normalization
\begin{equation}
|u(x_0)|^2 + |u'(x_0)|^2 = 1 \label{eq:pur.2}
\end{equation}
satisfies in $(x_0,\infty)$ the inequality
\begin{equation}
 |u(x)|^2 + |u'(x)|^2 \leq C \,. \label{eq:pur.3}
\end{equation}
\label{def:pur.1}
\end{lemma}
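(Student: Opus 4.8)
The plan is to treat equation (\ref{eq:pur.1}) as a perturbation of the unperturbed Kronig–Penney equation $-u''=\lambda u$ (with the same singular boundary conditions at the points $x_n$), the difference being the decaying centrifugal term $c/r^2$. Since $\lambda$ ranges over a compact subinterval $K$ strictly inside a band $(a,b)$ of $\opr{h}_\Lambda$, the two linearly independent Bloch solutions $u_0,v_0$ of the unperturbed equation are bounded on the whole axis, uniformly in $\lambda\in K$ — this is precisely where the spectral parameter being in the \emph{interior} of a band is used. I would fix such a pair $u_0,v_0$ and arrange them to be real-valued, which is possible inside a band; their Wronskian $W[u_0,v_0]$ is then a nonzero constant by Lemma~\ref{def:Wrons}, and it is bounded away from zero uniformly on $K$.

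The key device is the continuous vector function $y$ from (\ref{eq:Wcon}), built from the solution $u$ of (\ref{eq:pur.1}) against the unperturbed basis $u_0,v_0$. By Lemma~\ref{def:Wrons} this $y$ is continuous across the singular points $x_n$, so I can differentiate it in the classical sense everywhere on $(x_0,\infty)$. Because $u_0,v_0$ solve the unperturbed equation while $u$ solves the perturbed one, the derivative $y'$ is governed entirely by the perturbation: plugging $u''=(\lambda - c/r^2)u$ into the derivative of (\ref{eq:Wcon}) and using $u_0''=-\lambda u_0$, $v_0''=-\lambda v_0$, the terms proportional to $\lambda$ cancel and one is left with
\begin{equation}
y'(x) = \frac{c}{x^2}\,W[u_0,v_0]^{-1}
\left(\begin{array}{c} -v_0(x)\,u(x) \\ u_0(x)\,u(x)\end{array}\right)
= \frac{c}{x^2}\, A(x)\, y(x),
\nonumber
\end{equation}
where $A(x)$ is a matrix built from the bounded quantities $u_0,v_0$ and $W[u_0,v_0]^{-1}$, hence $\|A(x)\|$ is bounded by a constant uniform in $\lambda\in K$. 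This reduces the problem to a first-order linear system with an integrable coefficient.

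From here the estimate is a Gronwall argument. Taking norms, $\frac{\d}{\d x}\|y(x)\| \leq |c|\,x^{-2}\,\|A(x)\|\,\|y(x)\|$, and since $\int_{x_0}^\infty |c|\,x^{-2}\,\|A(x)\|\,\d x \leq |c|\,\|A\|_\infty\,x_0^{-1} =: C_2 <\infty$ is finite and independent of $\lambda$, Gronwall's inequality yields $\|y(x)\| \leq \|y(x_0)\|\,e^{C_2}$ for all $x>x_0$. Finally I transform back: since $(u,u')^\top = \big[\begin{smallmatrix} u_0 & v_0\\ u_0' & v_0'\end{smallmatrix}\big]\,y$, the boundedness of $u_0,v_0,u_0',v_0'$ on $K$ turns the bound on $\|y\|$ into the desired bound (\ref{eq:pur.3}) for $|u(x)|^2+|u'(x)|^2$, while the normalization (\ref{eq:pur.2}) controls $\|y(x_0)\|$ uniformly. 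The main obstacle — and the step deserving the most care — is establishing the \emph{uniform} boundedness of the unperturbed Bloch solutions $u_0,v_0$ (and their derivatives) and of $W[u_0,v_0]^{-1}$ over the compact set $K\subset(a,b)$; this is what genuinely uses that $K$ avoids the band edges, where the Bloch solutions degenerate and cease to be bounded.
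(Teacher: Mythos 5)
Your proposal is correct and takes essentially the same route as the paper's own proof: both pass to the continuous vector function $y$ of (\ref{eq:Wcon}) built from bounded, real-valued periodic solutions $u_0,\,v_0$ (with Lemma~\ref{def:Wrons} supplying continuity of $y$ across the points $x_n$), reduce the problem to the first-order system $y'=Ay$ with $\|A(x)\|=\mathcal{O}(x^{-2})$ integrable on $(x_0,\,\infty)$, and conclude by a Gronwall estimate before transforming back through the fundamental matrix $Y$. The only differences are cosmetic: the paper writes the return step via a Duhamel-type formula rather than directly as $(u,\,u')^{\top}=Yy$, and it estimates $(|y|^2)'$ rather than $\frac{\d}{\d x}\|y\|$, which sidesteps the non-differentiability of the norm at zeros of $y$.
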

\begin{proof}
For a fixed $\lambda \in K$ the equation $\opr{h}_\Lambda w =
\lambda w$ has two real-valued, linearly independent solutions, $
u_0 = u_0(\cdot,\,\lambda)$ and $v_0 = v_0(\cdot,\,\lambda)$, such
that $u_0,\,v_0 \in D(\opr{h}_\Lambda)$ and the functions
$|u_0|,\,|u'_0|,\,|v_0|,\,|v'_0| $ are periodic, bounded, and
continuous with respect to $\lambda$, cf.~\cite{Weid2}. Without
loss of generality we may assume that the determinant of the
matrix
  $$
   Y = \left[\begin{matrix}
                u_0 & v_0 \\
                u'_0 & v'_0
        \end{matrix}\right]
   $$
equals one; note that $u_0,\,v_0$ are real-valued and hence $\det
Y$ is continuous at the singular points in view of to the Lemma
\ref{def:Wrons}. It is also nonzero, hence to any solution $u$ of
(\ref{eq:pur.1}) we can define the function
$$
y := Y^{-1} \left[ \begin{matrix}
                        u \\
                        u'
                    \end{matrix} \right]
$$
which satisfies
\begin{equation}
y'= A y \quad \text{on every interval} \quad (na,\,(n+1)a),
\label{eq:pur.8}
\end{equation}
where the  the matrix $A$ is given by
$$
A := -\frac{c}{x^2} \left[ \begin{matrix}
                                u_0 v_0 & v_0^2 \\
                                -u_0^2 & -u_0v_0
                            \end{matrix} \right],
$$
being integrable away of zero. By a straightforward calculation we
get
$$
y = \left[\begin{matrix}
                v'_0 u - v_0 u' \\
                -u'_0 u + u_0 u'
            \end{matrix} \right]
$$
and using Lemma~\ref{def:Wrons} again we infer that $y$ is
continuous at the singular points. Consequently,
$$
y(x) = \exp\bigg\{\int\limits_{x_0}^x A(t)\, \d t\bigg\} y(x_0)
$$
is a solution of (\ref{eq:pur.8}) and following \cite{Hempel2} we
arrive at the estimates
$$
\frac{1}{2}(|y|^2)'\leq |(y,\,y')| \leq \|A\||y|^2
$$
 and so
$$
|y(x)|^2 \leq |y(x_0)|^2 \exp\bigg\{2 \int\limits_{x_0}^x \|A\|
(t)\,\d t \bigg\} \leq |Y^{-1}(x_0)|^2 \exp\bigg\{ 2
\int\limits_{x_0}^\infty \|A\| (t)\,\d t \bigg\}
$$
for $x \geq x_0$ and every solution of (\ref{eq:pur.1}) with the
normalization (\ref{eq:pur.2}). From
$$
\left[ \begin{matrix}
                u(x) \\
                u'(x)
            \end{matrix} \right] = Y(x) Y^{-1}(x_0) \left[ \begin{matrix}
                u(x_0) \\
                u'(x_0)
            \end{matrix} \right] + \int\limits_{x_0}^{x} Y(x) A(t) y(t)
            \d t
$$
we then infer that the function $|u(\cdot)|^2 + |u'(\cdot)|^2$ is
bounded in the interval $(x_0,\infty)$ which we set out to prove.
\end{proof}
Now we are ready to describe the essential spectrum of
$\opr{H}_\Lambda$.
\begin{theorem}
For $\opr{H}_\Lambda$ defined by (\ref{eq:opr}) the following is
true:
\begin{enumerate}
\item[(i)] For any gap $(E_{2k-1},\,E_{2k})$ in the essential
spectrum of $\opr{h}_\Lambda $,
\begin{enumerate}
\item $\opr{H}_\Lambda$ has no continuous spectrum in $
(E_{2k-1},\,E_{2k}) $, and \item the point spectrum of
$\opr{H}_{\Lambda}$ is dense in $(E_{2k-1},\,E_{2k})$.
\end{enumerate}
\item[(ii)] On any compact $K$ contained in the interior of a band
of $\opr{h}_\Lambda$ the spectrum of $\opr{H}_\Lambda$ is purely
absolutely continuous.
\end{enumerate}
\end{theorem}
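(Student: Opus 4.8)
The plan is to reduce everything to the partial-wave operators $\opr{H}_l = -d^2/dr^2 + c_l\, r^{-2}$, with $c_l := \frac{(\nu-1)(\nu-3)}{4} + l(l+\nu-2)$, on the half-line, and then to reassemble the spectral picture through the orthogonal sum (\ref{eq:opr}). Two structural facts drive the argument: the identity $\sigma_{ess}(\opr{H}_l) = \sigma_{ess}(\opr{h}_\Lambda)$ from (\ref{eq:EssSpectra}), which says that the essential spectrum of every partial wave is exactly the union of the bands of the one-dimensional comparison operator; and the fact that the absolutely continuous, singular continuous and point subspaces of an orthogonal sum are the orthogonal sums of the corresponding subspaces of the summands. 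Since $\opr{H}_\Lambda$ is unitarily equivalent to $\bigoplus_l \opr{H}_l \otimes \opr{I}_l$ and tensoring with $\opr{I}_l$ only raises multiplicities, each statement proved for the family $\{\opr{H}_l\}$ transfers verbatim, and without change of spectral type, to $\opr{H}_\Lambda$; in particular $\sigma_p(\opr{H}_\Lambda) \supseteq \bigcup_l \sigma_p(\opr{H}_l)$.

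For part (ii) I would fix a compact $K$ in the interior of a band and apply Lemma~\ref{def:pur.1} with $c=c_l$: for every $\lambda \in K$ all solutions of $\opr{H}_l u = \lambda u$ are bounded on $(x_0,\infty)$. By the subordinacy theory of Gilbert and Pearson \cite{GilPea,Simon}, the absence of a subordinate solution at every $\lambda \in K$ forces $\opr{H}_l$ to be purely absolutely continuous on $K$; passing to the orthogonal sum and using the decomposition of the ac subspace then yields that $\opr{H}_\Lambda$ is purely absolutely continuous on $K$. I expect this part to be immediate once Lemma~\ref{def:pur.1} is in hand.

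Part (i)(a) is equally short. For a gap $(E_{2k-1},E_{2k})$ the identity $\sigma_{ess}(\opr{H}_l)=\sigma_{ess}(\opr{h}_\Lambda)$ shows that the essential spectrum avoids the open gap, so $\sigma(\opr{H}_l)\cap(E_{2k-1},E_{2k})$ consists of isolated eigenvalues of finite multiplicity only, whence $\opr{H}_l$ has no continuous spectrum there. The orthogonal-sum decomposition of the absolutely continuous and singular continuous subspaces then shows that $\opr{H}_\Lambda$ has no continuous spectrum in $(E_{2k-1},E_{2k})$, even though, as (i)(b) asserts, the eigenvalues accumulate to fill the whole gap.

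The real work is part (i)(b), density of the point spectrum. Because each $\opr{H}_l$ is purely discrete in the gap, it suffices to produce, for every $\mu \in (E_{2k-1},E_{2k})$ and every $\varepsilon>0$, an index $l$ and a trial function $\psi \in D(\opr{H}_l)$ with $\|(\opr{H}_l-\mu)\psi\| \leq \varepsilon\|\psi\|$: then $\mathrm{dist}(\mu,\sigma(\opr{H}_l)) \leq \varepsilon$ and the nearby spectral point must be an eigenvalue. To build $\psi$ I would exploit $c_l r^{-2}\to 0$: on a long annulus $[R,R+T]$ the frozen operator $-d^2/dr^2 + c_l R^{-2}$ has spectrum equal to the bands of $\opr{h}_\Lambda$ shifted by $c_l R^{-2}$, so picking $e$ in the interior of the band $[E_{2k-2},E_{2k-1}]$ below the gap and fixing $R$ by $c_l R^{-2}=\mu-e$ places $\mu$ inside a frozen band. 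I would then take $w$ to be the associated bounded Bloch solution of $\opr{h}_\Lambda w = e w$ and set $\psi=\phi w$, with $\phi$ a smooth cutoff supported in $[R,R+T]$ that is locally constant near every interface point $R_n$; this condition makes the multiplier commute with $\Lambda$, so $\psi$ indeed lies in $D(\opr{H}_l)$. The residual then splits into a centrifugal-variation term of size $\O(T/R)$, from $c_l(r^{-2}-R^{-2})$ across the annulus, and a cutoff term of size $\O(T^{-1/2})$, from $\phi'$ and $\phi''$; since $R\sim\sqrt{c_l/(\mu-e)}\to\infty$ as $l\to\infty$, the choice $T=\sqrt{R}$ drives both to zero. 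The main obstacle is exactly this construction and estimate — keeping $\psi$ in the domain at the singular points while balancing the two competing errors — the rest being bookkeeping; note that for large $l$ the trial functions are supported far from the origin, so the boundary condition imposed there in low dimensions plays no role.
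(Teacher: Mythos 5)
Your proposal is correct, and for parts (i)(a) and (ii) it is essentially the paper's own argument: the identity (\ref{eq:EssSpectra}) excludes continuous spectrum of every $\opr{H}_l$ in the gap, Lemma~\ref{def:pur.1} plus subordinacy theory \cite{GilPea,Simon} gives pure absolute continuity on compacts inside bands, and the passage through the orthogonal sum (\ref{eq:opr}) is the same in both texts. (Two small technical remarks there: the $ac$ spectrum of a direct sum is the \emph{closure} of the union of the $\sigma_{ac}(\opr{H}_l)$, which is harmless because each of these sets lies in the closed set $\sigma_{ess}(\opr{h}_\Lambda)$ disjoint from the open gap; and the step from ``all solutions bounded'' to ``no subordinate solution'' uses that Lemma~\ref{def:pur.1} bounds $|u|^2+|u'|^2$ uniformly over \emph{normalized} solutions, i.e.\ bounds the transfer matrix, whose unimodularity then yields the matching lower bound --- exactly the hypothesis of the paper's Appendix A.)

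Where you genuinely diverge is (i)(b), and your route works but re-derives machinery the paper already has in stock. The paper gets density in two lines: by Theorem~\ref{def:ess} the whole gap lies in $\sigma_{ess}(\opr{H}_\Lambda)$, and since (i)(a) rules out continuous spectrum there, the essential spectrum in the gap can consist only of accumulating eigenvalues. You instead construct, for each $\mu$ in the gap, explicit quasimodes for individual partial-wave operators and use the discreteness of $\sigma(\opr{H}_l)$ in the gap to convert an $\varepsilon$-approximate spectral point into a genuine eigenvalue; the only precaution you should state is $\varepsilon<\mathrm{dist}(\mu,\{E_{2k-1},E_{2k}\})$, so that the nearby spectral point cannot escape into a band. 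Your construction is sound: a multiplier locally constant near each $R_n$ does preserve the interface conditions since these are linear in $(f,f')$, the Bloch solution at $e$ in the open band interior is bounded together with its derivative, and the balance $\O(T/R)+\O(T^{-1/2})$ with $T=\sqrt{R}$, $R=\sqrt{c_l/(\mu-e)}\to\infty$ closes the estimate. But observe that this is precisely the mechanism of the unnamed Weyl-criterion proposition by which the paper proved $\sigma_{ess}(\opr{H}_\Lambda)\supset[\inf\sigma_{ess}(\opr{h}_\Lambda),\infty)$: there the compactly supported approximants supplied abstractly by the core Lemma~\ref{def:core} play the role of your cutoff Bloch wave, and the choice $l_R=[\sqrt{\lambda}\,R]$ freezes the centrifugal term at $\lambda=\mu-e$ just as your condition $c_lR^{-2}=\mu-e$ does. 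What your variant buys is additional information --- it localizes the eigenvalues near $\mu$ in identifiable high partial waves and is quantitative in $l$ --- at the cost of redoing the Weyl-sequence argument inline; the paper's proof is shorter only because Theorem~\ref{def:ess} was established first and could simply be cited.
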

\begin{proof}
(i) By (\ref{eq:EssSpectra}), none of the operators
$\opr{H}_{l},\, l=0,\,1,\,2,\dots$ has a continuous spectrum in
$(E_{2k-1},\,E_{2k})$, hence $\opr{H}_{\Lambda}$ has no continuous
spectrum in this interval either. On the other hand, the entire
interval $(E_{2k-1},\,E_{2k})$ is contained in the essential
spectrum of $\opr{H}_\Lambda$; it follows that the spectrum of
$\opr{H}_\Lambda$ in $(E_{2k-1},\,E_{2k})$ consists solely of
eigenvalues which are necessarily dense in that interval. \\
(ii) The claim follows from the previous lemma and
\cite{GilPea,Simon}. To make the article self-contained we prove
in the appendix \ref{app:A}  a weaker result which still
guarantees the absolute continuity of the spectrum in the bands in
our singular case.
\end{proof}

\setcounter{equation}{0}
\section{The discrete spectrum}
\label{sec:disc_spectrum}

Recall that with the exception  of the case $\nu=2,\,l=0$ the
centrifugal term in the partial waves operators (\ref{eq:ParOpr})
is strictly positive, hence by the mini-max principle there is no
discrete spectrum below $E_0$. On the other hand, in the
two-dimensional case Brown et al. noticed that regular radially
periodic potentials give rise to bound states \cite{Welsh} which
they named in a nationalist spirit. Subsequently Schmidt
\cite{Schmidt2} proved that there are infinitely many such
eigenvalues of the operator $\opr{H}_0$ below
$\inf\sigma_\mathrm{ess} (\opr{H}_\Lambda)$. Our aim is to show
that this result persists for singular sphere interactions
considered here.

\begin{theorem}
Let $\nu=2$, then except of the free case the operator
$\opr{H}_\Lambda$ has infinitely many eigenvalues in
$(-\infty,E_0)$, where $E_0:= \inf\sigma_\mathrm{ess}
(\opr{H}_\Lambda)$.
\end{theorem}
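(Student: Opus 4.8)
The plan is to reduce the claim to the lowest partial wave and to exhibit there an attractive inverse–square tail whose effective coupling exceeds the critical Hardy value. As recalled at the start of this section, for $\nu=2$ the centrifugal coefficient in (\ref{eq:ParOpr}) equals $l^{2}-\frac14$, strictly positive for $l\ge1$, so by the mini-max principle $\inf\sigma(\opr{H}_{l})\ge E_{0}$ for every $l\ge1$ and the only possible source of spectrum below $E_{0}$ is the channel $l=0$, whose radial operator is $\opr{H}_{0}=-\d^{2}/\d r^{2}-\frac{1}{4r^{2}}$ subject to the point interactions at the $R_{n}$ and to the condition $\lim_{r\to0+}[\sqrt r\ln r]^{-1}f=0$. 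It therefore suffices to produce infinitely many eigenvalues of $\opr{H}_{0}$ below $E_{0}$, which by (\ref{eq:EssSpectra}) coincides with $\inf\sigma_{\mathrm{ess}}(\opr{h}_{\Lambda})$.

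Let $u_{0}$ be the band–edge solution of $\opr{h}_{\Lambda}u_{0}=E_{0}u_{0}$ from Proposition \ref{def:Sym}. For $\beta\ge0$ it is periodic and nodeless, and the ground-state substitution $\phi=u_{0}g$ turns $\opr{H}_{0}-E_{0}$, after using $-u_{0}''=E_{0}u_{0}$ and the continuity of the Wronskian from Lemma \ref{def:Wrons} to cancel all boundary contributions at the $R_{n}$, into the weighted Hardy form
$$\langle\phi,(\opr{H}_{0}-E_{0})\phi\rangle=\int_{0}^{\infty}\rho\,|g'|^{2}\,\d r-\int_{0}^{\infty}\frac{\rho}{4r^{2}}\,|g|^{2}\,\d r,\qquad\rho:=|u_{0}|^{2},$$
acting in $L^{2}(\rho\,\d r)$. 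To keep $\phi$ inside the form domain when $\beta\ne0$, so that the matching (\ref{eq:BoundCon}) is respected, I would realize $g$ through the substitution used after Lemma \ref{def:Wrons}, writing $\binom{\phi}{\phi'}=Y\binom{g_{1}}{g_{2}}$ with $(g_{1},g_{2})$ continuous across every $R_{n}$; Lemma \ref{def:Wrons} then guarantees the boundary conditions automatically.

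The effective coupling of the reduced form is $\frac{1}{4\alpha}$, where $\alpha$ is the curvature of the dispersion of $\opr{h}_{\Lambda}$ at the edge, i.e. the homogenized coefficient of $-\rho^{-1}(\rho\,\cdot')'$. A short computation gives $\alpha=(\langle\rho\rangle\langle\rho^{-1}\rangle)^{-1}$ with $\langle\cdot\rangle$ the cell average, so that $\alpha\le1$ by the Cauchy--Schwarz inequality, with equality exactly when $|u_{0}|$ is constant, that is, in the free case. Off the free case one thus has $\alpha<1$, and the tail is strictly supercritical. Passing to $s=\ln r$ and $g=r^{1/2}\varphi$ converts the reduced form into $\alpha\int_{\mathbb{R}}|\varphi'|^{2}\,\d s-\frac{1-\alpha}{4}\int_{\mathbb{R}}|\varphi|^{2}\,\d s$, a one-dimensional problem with a strictly negative constant potential. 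Dirichlet envelopes $\varphi$ supported on sufficiently long, mutually disjoint intervals receding to $+\infty$ then have negative energy; pulled back through the substitutions they furnish infinitely many linearly independent $\phi\in D(\opr{H}_{0})$ with $\langle\phi,(\opr{H}_{0}-E_{0})\phi\rangle<0$, and since their supports are disjoint the mini-max principle yields infinitely many eigenvalues below $E_{0}$. Equivalently, the homogenized operator $-\alpha\,\d^{2}/\d r^{2}-\frac{1}{4r^{2}}$ has Hardy coupling $\frac{1}{4\alpha}>\frac14$ and hence, by the Kneser oscillation criterion, infinitely many eigenvalues accumulating at $E_{0}$ from below.

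The principal obstacle is the passage from the oscillatory weight $\rho=|u_{0}|^{2}$ to the single homogenized constant $\alpha$ while staying inside the domain of the singular operator. A literal slowly-varying envelope $u_{0}g$ is admissible only for $\beta=0$; when $\beta\ne0$ it violates the matching, so one must insert the cell corrector and use the Wronskian substitution of Lemma \ref{def:Wrons} to repair the boundary behaviour, and then verify that the two-scale error terms do not destroy the strict negativity of the Rayleigh quotient. The most delicate regime is the antiperiodic one, $\beta<0$ in Proposition \ref{def:Sym}, in which---unlike for regular potentials---the bottom of the spectrum carries a band-edge function with nodes; there $\rho$ vanishes, the ground-state substitution degenerates, and the quantities $\langle\rho^{-1}\rangle$ must be controlled by excising neighbourhoods of the zeros or by working throughout with the two-component envelope $(g_{1},g_{2})$. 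Establishing the strict inequality $\alpha<1$ away from the free case---equivalently, that a nontrivial interaction genuinely flattens the lowest band---is what renders the inverse-square tail supercritical and is the crux separating the claimed result from the free situation, where $\alpha=1$ and no eigenvalue appears below $E_{0}$.
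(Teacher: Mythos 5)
Your route (ground-state substitution plus homogenization to an effective supercritical Hardy tail) is genuinely different from the paper's, but it contains a concrete error at exactly the point you yourself call the crux. Your dichotomy ``$\alpha_{\mathrm{eff}}=(\langle\rho\rangle\langle\rho^{-1}\rangle)^{-1}=1$ iff $|u_0|$ is constant iff the interaction is trivial'' is false in the singular setting. For the pure repulsive $\delta'$ interaction ($\beta>0$, $\alpha=0$, $\gamma=\delta=1$ in (\ref{eq:BoundCon})) the constant function satisfies the matching conditions (its derivative vanishes, so the $\beta$-term drops out) and is the periodic band-edge solution at $E_0=0$; hence $\rho\equiv\mathrm{const}$ and your formula gives $\alpha_{\mathrm{eff}}=1$, an exactly critical homogenized operator $-\d^2/\d r^2-\frac{1}{4r^2}$, and your construction produces no negative-energy trial functions --- yet the theorem asserts infinitely many eigenvalues precisely there. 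The defect is that $(\langle\rho\rangle\langle\rho^{-1}\rangle)^{-1}$ is the effective-mass formula for \emph{regular} potentials; when $\beta\neq0$ the cell corrector must jump at the interaction sites and contributes an additional ``serial-spring'' compliance. Indeed, a direct transfer-matrix computation for the pure $\delta'$ lattice gives the dispersion $E\approx q^2\,d/(d+\beta)$ near the edge, i.e.\ the true coefficient is $d/(d+\beta)<1$, not $1$: the supercriticality in this case comes entirely from the jump term your averaged formula omits. This is no accident --- it is exactly the case the paper has to work hardest on, excluding it not by any smooth averaging but by summing the strictly negative Pr\"ufer-phase discontinuities $-\beta/(r_n\,u(r_n+)u(r_n-))$ of (\ref{eq:discon}) into a divergent harmonic-type series.

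Beyond this, the two hard steps of your program are flagged but never executed: the antiperiodic regime $\beta<0$, where the edge function has zeros, $\langle\rho^{-1}\rangle$ diverges, and the ground-state substitution degenerates; and the quantitative control of the two-scale corrector errors against the gain $\frac{1-\alpha_{\mathrm{eff}}}{4}$ needed to make the receding Dirichlet envelopes genuinely negative inside $D(\opr{H}_0)$. The paper sidesteps all of this: following Schmidt it writes the zero-energy-shifted solution in a Pr\"ufer-type Ansatz relative to the edge solutions $u,v$ (continuity of the phase guaranteed by the Wronskian Lemma \ref{def:Wrons} and (\ref{eq:Wcon})), applies the Kepler transformation, shows $\phi'\le0$ with strictly negative jumps for $\beta\neq0$, proves that a finite limit of $\phi$ forces $u^2$ constant --- which leaves only the free case and the pure repulsive $\delta'$ case, the latter killed by the divergent jump sum --- and concludes via the relative oscillation theory of Corollary \ref{def:osccol}. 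That argument treats periodic and antiperiodic edges and the constant-modulus $\delta'$ case uniformly. To salvage your approach you would need the corrected homogenized coefficient including the jump contribution, a proof that it is strictly less than one except in the free case, rigorous corrector error bounds, and a separate device for the nodal case --- in effect a new proof of Schmidt's critical-coupling theorem for singular interactions, which is substantially more than the sketch provides.
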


\begin{proof} The argument is again similar to that of the regular
case \cite{Schmidt2}, hence we limit ourselves to just sketching
it. First of all, it is clear that we have to investigate the
spectrum of $\opr{H}_{\Lambda,0}$.

Let $u,\,v$ be linearly independent real-valued solutions of the
equation $\opr{h}_\Lambda z = E_0 z$, where $u$ is (anti)periodic
--- cf.~Proposition \ref{def:Sym}. --- satisfying $W[u,\,v] = 1$.
We will search the solution of $\opr{H}_0 y \equiv -y''
-\frac{1}{4r^2} y = E_0y$, we are interested in, using a
Pr\"ufer-type Ansatz, namely
\[
\left(\begin{array}{c}
    y \\ y'
\end{array}\right)
= \left(\begin{array}{cc}
 u & v  \\
 u'& v' \end{array}\right)
 a
 \left( \begin{array}{c}
   \sin \gamma \\
   -\cos \gamma
 \end{array} \right),
\]
where $a$ is a positive function and $\gamma$ is chosen continuous
recalling Lemma~\ref{def:Wrons} and eq.~(\ref{eq:Wcon}). It is
demonstrated in \cite{Schmidt2} that the function $\gamma(\cdot)$
and the standard Pr\"ufer variable $\theta(\cdot)$, appearing in
\[
\left(\begin{array}{c}
    y \\ y'
\end{array}\right)
=
 \rho
 \left( \begin{array}{c}
   \cos \theta \\
   \sin \theta
 \end{array} \right),
\]
are up to constant asymptotically equal to each other as $r \to
\infty$.  According to Corollary~\ref{def:osccol} there are then
infinitely many eigenvalues below $E_0$ if $\theta$, and therefore
also $\gamma$, is unbounded from below.

Now a straightforward computation yields
\[
\gamma'= -\frac{1}{4r^2}(u \sin \gamma - v \cos \gamma)^2= -
\frac14\, \cos^2\gamma\, u^2 \Big(\frac{1}{r} \tan \gamma -
\frac{v}{r\,u} \Big)^2.
\]
Furthermore, the Kepler transformation given by the relation $\tan
\phi = (r^{-1} \tan \gamma - r^{-1}v/u)$ satisfies $\gamma(r) =
\phi(r) + \mathcal{O}(1)$ as $r \to \infty$, and
\begin{multline}
\phi' = \frac{1}{r} \Big(-\sin \phi \cos \phi -\frac14 u^2 \sin^2
\phi - \frac{1}{u^2} \cos^2 \phi \Big) \\
= -\frac{1}{2r} \bigg(\frac{1}{u^2} +\frac14 u^2 + \sin 2 \phi +
\Big(\frac{1}{u^2} -\frac14 u^2\Big) \cos 2 \phi \bigg)
\label{eq:phi}
\end{multline}
holds on $\mathbb{R} \setminus \cup_n\{r_n\}$ with the
discontinuity
\begin{equation}
\tan \phi(r_n+) - \tan \phi(r_n-) = -\frac{1}{r_n}
\frac{\beta}{u(r_n+) u(r_n-)}, \label{eq:discon}
\end{equation}
where $\beta$ is the parameter appearing in (\ref{eq:BoundCon}). A
direct analysis of the equation (\ref{eq:phi}) shows that $\phi'
\leq 0$, and owing to (\ref{eq:discon}) and
Proposition~\ref{def:Sym} the corresponding discontinuity is
strictly negative for $\beta \neq 0$. Hence $\phi$ is decreasing
and there is a limit $L = \lim_{r \to \infty} \phi(r)$. Suppose
that $L$ is finite. Then the condition $| \int_0^\infty
\phi'(t)\,\d t| < \infty $ gives
\begin{equation}
\frac{1}{u^2(r)} + \frac{1}{4} u^2(r) + \sin2\phi(r) +
\left(\frac{1}{u^2(r)} - \frac{1}{4} u^2(r)\right)\cos2 \phi(r) \to
0 \quad \mbox{as} \quad r \to \infty \label{eq:last}
\end{equation}
and, as $u$ is (anti)-periodic and $\phi$ tends to a constant, we
infer that $u^2$ is constant also, not only asymptotically but
everywhere. With the exception of the free case this may happen
only for pure repulsive $\delta'$ interaction, $\beta
> 0,\,\alpha = 0,\,\gamma=\delta =1$. To finish the proof we
employ eq.~(\ref{eq:last}) again and observe that $L \neq \pi/2\,
(\mathrm{mod}\, \pi)$ holds necessarily. We thus find a monotonous
sequence of points $r_{n}$ such that $\phi(r_{n}-) < \frac{\pi}{2}
\left( 1+ \left[ \frac{2L}{\pi} \right] \right)$, where $[\cdot]$
is the integer part. Since $\phi$ is monotonous we have
$\phi(r_{n}\pm) \ge L$, hence all these points belong to the same
branch of the $\tan$ function. Summing then the discontinuities
(\ref{eq:discon}) we get
\begin{eqnarray*}
\tan\phi(r_N+) - \tan\phi(r_n-) &&\leq \sum\limits_{i=n}^N
\tan\phi(r_i+) - \tan\phi(r_i-) \\ && = -\sum\limits_{i=n}^N
\frac{1}{r_i} \frac{\beta}{u(r_i+)u(r_i-)}\,,
\end{eqnarray*}
where the right-hand side diverges as $N \to \infty$ for any
$\beta>0$, while the left-hand side tends to a finite number
$\tan(L) - \tan \phi(r_n-)$. Hence $L$ can be finite for the free
Hamiltonian only, which was to be demonstrated.
\end{proof}

\setcounter{equation}{0}
\appendix
\section{Continuous spectra for one dimensional \\ Schr\"{o}dinger
operators with singular interactions} \label{app:A}

In this appendix we consider Schr\"{o}dinger operators on a
halfline,
\begin{eqnarray}
 &&(\opr{H}u)(x) = -u''(x) + V(x)u(x) \\
 &&u(0)=0,\quad U(x_n+) = \Lambda U(x_n-),
\end{eqnarray}
where we suppose that the condition
\begin{equation}
 \int_{K}^\infty|u'|^2 \leq \beta \int_K^{\infty}(|\opr{H}u|^2 + |u|^2),
 \label{eq:SobCon}
\end{equation}
holds for some $\beta,\,K >0$ and every $u \in D(\opr{H})$. This is
obviously the case of operators $\opr{H}_{\lambda,\,l}$, where in
the dimension $\nu > 2$ we may put $K=0$, while for $\nu = 2$ we
have to choose $K>0$.

Given a solution $u$ of $\opr{H}u = Eu$ we define the transfer
matrix $\opr{T}(E,\,x,\,y)$ at energy $E$ by
\begin{equation}
 \opr{T}(E,\,x,\,y) \left(\! \begin{array}{c}
                        u'(y) \\
                        u(y)
                \end{array} \! \right)
                = \left(\! \begin{array}{c}
                        u'(x) \\
                        u(x)
                \end{array} \! \right).
\label{eq:Transfer}
\end{equation}
Our purpose is to prove the following result.

\begin{theorem} \label{def:A1}
Let $\opr{T}(E,\,x,\,y)$ be bounded on $S$. Then for every
interval $(E_1,\,E_2) \subset S $ we have $\rho_{ac}((E_1,\,E_2))
>0$ and $\rho_{sc}((E_1,\,E_2)) =0$, where $\rho$ denotes the
spectral measure associated with the operator $\opr{H}$.
\end{theorem}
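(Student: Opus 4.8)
The plan is to reduce the statement to the boundary behaviour of the Weyl--Titchmarsh $m$-function. First I would introduce the solutions $\varphi(\cdot,z),\theta(\cdot,z)$ of $\opr{H}u=zu$ fixed by $\varphi(0)=0,\,\varphi'(0)=1$ and $\theta(0)=1,\,\theta'(0)=0$, together with the Weyl solution $\psi(\cdot,z)=\theta(\cdot,z)+m(z)\varphi(\cdot,z)$, which for $\im z>0$ is the unique (up to a constant) solution lying in $L^2(0,\infty)$; here $m$ is a Herglotz function whose boundary values $m(E+\mathrm{i}0)$ exist for a.e.\ $E$. The spectral measure $\rho$ of $\opr{H}$ is recovered from $m$ by Stieltjes inversion, so that $\d\rho_{ac}/\d E=\frac1\pi\,\im m(E+\mathrm{i}0)$ almost everywhere, while the singular part is carried by the set where $\im m(E+\mathrm{i}0)=\infty$. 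The single identity I would lean on is
\[
\im m(E+\mathrm{i}\epsilon)=\epsilon\int_0^\infty|\psi(x,E+\mathrm{i}\epsilon)|^2\,\d x ,
\]
obtained from Green's formula applied to $\psi$; crucially, the boundary terms produced at the interaction points $x_n$ cancel because $W[\bar\psi,\psi]$ is continuous across each $x_n$ by Lemma~\ref{def:Wrons}, so the singular interactions do not spoil the identity.

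The core of the argument is then to control $\psi$ at complex energy using the hypothesis, which bounds the \emph{real}-energy transfer matrix $\opr{T}(E,x,y)$ of (\ref{eq:Transfer}) uniformly for $E\in S$. Writing $z=E+\mathrm{i}\epsilon$ and treating $\mathrm{i}\epsilon\,u$ as an inhomogeneity in $\opr{H}u=zu$, I would convert the equation into a Volterra integral equation whose kernel is assembled from the bounded real-energy solutions, and apply a Gronwall estimate. Because $\det\Lambda=1$ the jump conditions contribute only fixed bounded factors $\|\Lambda\|,\|\Lambda^{-1}\|$ once per period and do not affect the bound. The outcome is that both $\opr{T}(z,x,0)$ and its inverse are bounded by constants uniform in $E$ on a compact subinterval $K\subset(E_1,E_2)$, as long as $x\lesssim\epsilon^{-1}$; on that window $|\psi|^2+|\psi'|^2$ is pinched between two positive constants times its value at the origin.

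With these bounds in hand the two conclusions follow by splitting $\int_0^\infty=\int_0^{1/\epsilon}+\int_{1/\epsilon}^\infty$. For the upper bound I would estimate the near part by the Gronwall growth, which becomes $\O(1)$ after multiplication by $\epsilon$, and control the far $L^2$-tail by smallness, obtaining a bound $\im m(E+\mathrm{i}\epsilon)\le M$ uniform for $E\in K$ and $0<\epsilon<1$; letting $\epsilon\to0$ then makes $\im m(E+\mathrm{i}0)$ finite at \emph{every} such $E$, so the singular part of $\rho$ gives no mass to $K$, and by exhaustion $\rho_{sc}((E_1,E_2))=0$. For the lower bound I would use that $\psi$ cannot decay on $[0,1/\epsilon]$ (the inverse transfer matrix being bounded) to get $\int_0^{1/\epsilon}(|\psi|^2+|\psi'|^2)\gtrsim\epsilon^{-1}$, and then invoke the Sobolev-type inequality (\ref{eq:SobCon}), applied through $\opr{H}\psi=z\psi$, to prevent this $H^1$-mass from concentrating in the derivative, whence $\int_0^{1/\epsilon}|\psi|^2\gtrsim\epsilon^{-1}$ and $\liminf_{\epsilon\to0}\im m(E+\mathrm{i}\epsilon)\ge c>0$ uniformly on $K$; Fatou then gives $\rho_{ac}((E_1,E_2))\ge\frac1\pi\int_{E_1}^{E_2}\im m(E+\mathrm{i}0)\,\d E>0$.

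The step I expect to be the main obstacle is the complex-energy propagation: transferring the real-energy bound on $\opr{T}(E,\cdot,\cdot)$ to a bound on $\opr{T}(E+\mathrm{i}\epsilon,\cdot,\cdot)$ that is uniform in $E\in K$ and valid precisely on the window $x\lesssim\epsilon^{-1}$, while simultaneously verifying that the Volterra/Gronwall scheme passes unchanged through the singular points $x_n$. Once this quantitative comparison is established, the remaining Weyl-function bookkeeping sketched above is routine.
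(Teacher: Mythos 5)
Your skeleton is the paper's own (it is Simon's $m$-function method): the identity $\im m_+(E+i\varepsilon)=\varepsilon\int_0^\infty|\psi|^2\,\d x$, a Volterra/Gronwall comparison that transfers the real-energy bound on $\opr{T}(E,x,y)$ to complex energy, and the condition (\ref{eq:SobCon}) to convert control of $|\psi|^2+|\psi'|^2$ into control of $|\psi|^2$. The lower-bound half is essentially right, with one important omission: since the Weyl solution has initial data $(\psi(0),\psi'(0))=(1,m)$, the pinching gives $|\psi(x)|^2+|\psi'(x)|^2\geq C^{-2}e^{-2\varepsilon Cx}\,(1+|m|^2)$, and you should \emph{keep} the factor $(1+|m|^2)$ rather than discard it, because it is precisely what rescues the second half of your argument.

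The genuine gap is in your upper bound. Your near-part estimate is not $\O(1)$ after multiplication by $\varepsilon$: every Gronwall bound on $\psi$ carries the size of the initial data, so $\varepsilon\int_0^{1/\varepsilon}|\psi|^2=\O(1+|m(E+i\varepsilon)|^2)$, which is the very quantity you are trying to bound --- the argument is circular. Likewise there is no a priori ``smallness'' of the far tail: $\varepsilon\int_{1/\varepsilon}^\infty|\psi|^2$ is itself a piece of $\im m$, and $\psi\in L^2$ only with an $\varepsilon$-dependent norm, so nothing independent controls it. The paper closes this loop algebraically rather than analytically: retaining $(1+|m|^2)$ in the lower bound and using (\ref{eq:SobCon}) with $\gamma=((E+1)^2\beta^2+1)^{-1}$ yields the single inequality $\im m_+\geq\tfrac12 C^{-3}\gamma(1+|m_+|^2)$, from which $|m_+|\leq(1+|m_+|^2)/\im m_+\leq 2C^3\gamma^{-1}$ and hence $\im m_+\leq|m_+|$ is bounded; both $\liminf\im m_+>0$ and $\limsup\im m_+<\infty$ drop out of one estimate, with no splitting of the integral and no window. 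A second point that would actually break your scheme if implemented as stated: saying the jumps ``contribute bounded factors $\|\Lambda\|,\|\Lambda^{-1}\|$ once per period'' is fatal over the window $x\sim\varepsilon^{-1}$, since it accumulates to $\|\Lambda\|^{x/d}$, exponentially large. The correct observation, and the one the paper verifies, is that in the product $\opr{T}(E_1,x,y)\,\opr{T}(E_2,y,x)$ the jump factors cancel \emph{exactly} (the transfer matrix $\Lambda$ is energy independent), so this product is continuous in $y$ at the points $x_n$, the Volterra identity
\begin{equation}
\opr{T}(E+i\varepsilon,x,0)=\opr{T}(E,x,0)-i\varepsilon\int_0^x\opr{T}(E,x,y)\left(\!\begin{array}{cc} 0 & 1\\ 0 & 0\end{array}\!\right)\opr{T}(E+i\varepsilon,y,0)\,\d y \nonumber
\end{equation}
holds with no per-period corrections, and Gronwall gives $\|\opr{T}(E+i\varepsilon,x,0)\|\leq Ce^{\varepsilon Cx}$ for \emph{all} $x$ (together with $\|\opr{T}^{-1}\|=\|\opr{T}\|$ from $\det\opr{T}=1$), which is all the complex-energy information needed.
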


\noindent Following \cite{Simon} we employ the theory of Weyl
m-functions. For $E \in C_+=\{z,\,\im z>0\}$, there is a unique
solution $u_+(x,\,E)$ of $ \opr{H} u_+(x,\,E) = E u_+(x,\,E)$ with
$u_+ \in L^2$ at infinity, which is normalized by $u_+(0,\,E)=1$.
We define the m-function by
 $$ m_+(E) = u_+'(0,\,E)\,;$$
the spectral measure $\rho$ is then related to it by
$$ \d \rho(E) = \frac{1}{\pi} \lim_{\varepsilon \downarrow 0}\, \im m_+(E
+ i\varepsilon)\,,$$
where the imaginary part at the right-hand side can be expressed
as
\begin{equation} \im m_+(E) = \im E \int_0^\infty |u_+(x,\,E)|^2 \d x.
\label{eq:im}
\end{equation}
It is known, see \cite{Simon} and references therein, that
$$ \mathrm{supp\,} \rho_{sc}= \Big\{E:\: \lim_{\varepsilon
\downarrow 0} \im m_+(E+i\varepsilon) = \infty \Big\}\,, $$
while $\d \rho_{ac}(E) = \frac{1}{\pi} \im m_+(E+i0)\, \d E$.
Theorem \ref{def:A1} is then an immediate consequence of the
following result.

\begin{theorem}
If $\opr{T}(E,\,x,\,y)$ be bounded as above and $E \in
(E_1,\,E_2)$, then
 $$
\lim\inf\, \im m_+(E+i0) > 0 \quad \mathrm{and} \quad \lim\sup\,
\im m_+(E+i0) <\infty\,.
 $$
\end{theorem}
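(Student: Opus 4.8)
The plan is to reduce the whole statement to a two-sided control of the quantity $\varepsilon\int_0^\infty|u_+(x,E+i\varepsilon)|^2\,\d x$, which by \eqref{eq:im} equals $\im m_+(E+i\varepsilon)$. First I would justify \eqref{eq:im} in the singular setting: writing $z=E+i\varepsilon$ and using $-u_+''+Vu_+=zu_+$ one gets $\frac{\d}{\d x}\im[\overline{u_+}u_+']=-\varepsilon|u_+|^2$ away from the points $x_n$, while $\im[\overline{u_+}u_+']=\tfrac12\im W[\overline{u_+},u_+]$ is \emph{continuous} across each $x_n$ by Lemma~\ref{def:Wrons}, since $u_+$ obeys the transfer condition. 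Integrating from $0$ to $\infty$, with $u_+(0)=1$, $u_+'(0)=m_+$ and the $L^2$ decay at infinity, yields \eqref{eq:im}. It therefore suffices to show $c_1\le\varepsilon\int_0^\infty|u_+|^2\le c_2$ with $c_1,c_2>0$ uniform in $\varepsilon$ for $E$ in a compact subinterval of $(E_1,E_2)$.

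The common tool is a comparison of solutions at complex energy $z$ with those at the real energy $E$. Writing the equation as a first-order system $w'=\mathcal{M}(x,z)w$ with $w=(u',u)$, the difference $\mathcal{M}(x,z)-\mathcal{M}(x,E)$ has norm $\varepsilon$; and because the jump matrices $\Lambda$ are energy-independent with $|\det\Lambda|=1$, the Duhamel expansion of $\opr{T}(z,x,y)$ around $\opr{T}(E,x,y)$ receives no contribution from the $x_n$. Since $\opr{T}(E,\cdot,\cdot)$ is bounded on $S$ and $|\det\opr{T}(E,x,y)|=1$, both it and its inverse are bounded by a constant $C$, and a Gr\"onwall estimate gives $\|\opr{T}(z,x,y)\|\le C\,e^{C\varepsilon|x-y|}$. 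Hence on any window of length $\O(\varepsilon^{-1})$ the mass $P(x):=|u_+(x)|^2+|u_+'(x)|^2$ of a solution changes by at most a bounded factor, so the complex-energy fundamental solutions $c(\cdot,z),s(\cdot,z)$ stay comparable to their real-energy counterparts for $x\lesssim\varepsilon^{-1}$. At the real energy, boundedness of $\opr{T}(E,\cdot,\cdot)$ (and of its inverse) pins $P$ between positive constants, so, using \eqref{eq:SobCon} to pass between $\int P$ and $\int|u|^2$, the truncated norms $\int_0^N|c(\cdot,E)|^2$ and $\int_0^N|s(\cdot,E)|^2$ grow linearly in $N$.

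The lower bound is then immediate. The comparison estimate shows the Weyl solution cannot decay faster than $e^{-C\varepsilon x}$, so $\int_0^{1/\varepsilon}P\gtrsim\varepsilon^{-1}$, and via \eqref{eq:SobCon} (with $\opr{H}u_+=zu_+$ and $|z|$ bounded) $\int_0^\infty|u_+|^2\gtrsim\varepsilon^{-1}$; thus $\liminf\im m_+(E+i0)>0$, i.e.\ $\rho_{ac}>0$.

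The upper bound is the delicate step, since a mere sup-bound on the transfer matrix does not force pointwise decay of $u_+$, so the absence of a singular-continuous part must be read off globally rather than by forward propagation. Here I would use the Weyl-disc picture: for each $N$ one has $m_+(z)\in D_N=\{m:\im m\ge\varepsilon\int_0^N|c+ms|^2\}$ (the identity defining $D_N$ is again legitimate at the $x_n$ by Lemma~\ref{def:Wrons}), a disc contained in $\overline{\{\im m\ge0\}}$ whose centre sits at height
\[
q_0=\frac{1}{2\varepsilon\int_0^N|s(x,z)|^2\,\d x}
+\frac{\im\int_0^N c(x,z)\,\overline{s(x,z)}\,\d x}{\int_0^N|s(x,z)|^2\,\d x}.
\]
Because the disc lies in the closed upper half-plane, $\im m_+\le q_0+r_N\le 2q_0$. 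Choosing $N\sim\varepsilon^{-1}$ and inserting the linear growth $\int_0^N|s(\cdot,z)|^2\asymp\int_0^N|c(\cdot,z)|^2\asymp N\sim\varepsilon^{-1}$ established above makes the first term $\O(1)$ and bounds the second by $\|c\|_N/\|s\|_N=\O(1)$ through Cauchy--Schwarz. Hence $\im m_+(z)=\O(1)$ uniformly as $\varepsilon\downarrow0$, giving $\limsup\im m_+(E+i0)<\infty$ and thus $\rho_{sc}=0$. The main obstacle is precisely this last paragraph: it is where the \emph{boundedness} of $\opr{T}$—not just its subexponential growth—enters, forcing the natural length scale $N\sim\varepsilon^{-1}$ at which both the disc radius and the ratio $\|c\|_N/\|s\|_N$ are of order one.
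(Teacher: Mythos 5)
Your proposal is correct, but its second half takes a genuinely different route from the paper, and it is worth seeing why the paper gets away with much less. The lower bound is essentially identical in both: the Duhamel formula for $\opr{T}(E+i\varepsilon,x,0)$ around $\opr{T}(E,x,0)$ (with the energy-independent jump matrices cancelling, as you note), a Gr\"onwall iteration giving $\|\opr{T}(E+i\varepsilon,x,0)\|\leq C e^{\varepsilon C x}$, the identity $\|\opr{T}\|=\|\opr{T}^{-1}\|$ from $\det\opr{T}=1$, a pointwise lower bound on $|u_+|^2+|u_+'|^2$, and condition (\ref{eq:SobCon}) to trade $\int(|u_+|^2+|u_+'|^2)$ for $\int|u_+|^2$. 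Where you diverge is that you throw away the factor $1+|m_+|^2$ carried by the initial datum $(m_+,1)$, and must then rebuild the upper bound from scratch via the Weyl-disc geometry at truncation scale $N\sim\varepsilon^{-1}$. The paper keeps that factor and lands on the single inequality $\im m_+\geq\frac12 C^{-3}\gamma(1+|m_+|^2)$, from which \emph{both} claims follow by pure algebra: $1+|m_+|^2\geq 1$ gives the liminf, while $2C^3\gamma^{-1}\geq(1+|m_+|^2)/\im m_+\geq|m_+|\geq\im m_+$ gives the limsup. So the step you single out as "the delicate one" dissolves entirely in the paper's treatment --- no disc, no fundamental solutions $c,s$, no two-sided linear-growth estimates. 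Your disc argument is sound (the containment $m_+\in D_N$, the height bound $\im m_+\leq 2q_0$ from $D_N\subset\overline{\{\im m\geq0\}}$, and the choice $N\sim\varepsilon^{-1}$ all check out), and it is the more standard and portable mechanism, but it carries one small gap you should patch: (\ref{eq:SobCon}) is stated for $u\in D(\opr{H})$, hence for functions in $L^2$ at infinity, so it cannot be quoted verbatim for the truncated norms $\int_0^N$ of the non-square-integrable solutions $c(\cdot,z)$ and $s(\cdot,z)$; instead one needs the cell-wise Redheffer inequality in the spirit of Lemma~\ref{def:OdhadDer}, combined with $|s''|=|(V-z)s|\leq\mathrm{const}\,|s|$ on $[K,N]$, to convert $\int_0^N(|s|^2+|s'|^2)$ into $\int_0^N|s|^2$ up to boundary-cell errors. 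Finally, your preliminary justification of (\ref{eq:im}) --- continuity of $\im(\bar u_+u_+')$ across the points $x_n$ via Lemma~\ref{def:Wrons} and $\Lambda^*\sigma_2\Lambda=\sigma_2$ --- is a genuine addition: the paper invokes (\ref{eq:im}) without verifying it in the singular setting, and your check that the jump conditions preserve exactly the quantity whose derivative is $-\varepsilon|u_+|^2$ is the right way to do it.
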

\begin{proof}
For $x \neq x_n$ we have the relations
 \begin{eqnarray} && \frac{\d \opr{T}(E,\,x,\,y)}{\d x} = \left(\! \begin{array}{cc}
                                                0 & V(x) - E \\
                                                1 & 0
                                            \end{array} \! \right)
                                            \opr{T}(E,\,x,\,y),  \nonumber \\
                  && \frac{\d}{\d y}\left((\opr{T}(E_1,\,x,\,y)
                  \opr{T}(E_2,\,y,\,x)\right) =(E_1 - E_2)\opr{T}(E_1,\,x,\,y)
                  \left(\!\begin{array}{cc}
                    0 & 1 \\
                    0 & 0
                  \end{array}\!\right) \opr{T}(E_2,\,y,\,x) .\nonumber
 \end{eqnarray}
It is straightforward to verify that $\opr{T}(E_1,\,x,\,y)
\opr{T}(E_2,\,y,\,x)$ is continuous at singular points with respect
to $y$ and hence
$$1 -\opr{T}(E_1,\,x,\,0) \opr{T}(E_2,\,0,\,x) = \int_0^x (E_1 - E_2)\opr{T}(E_1,\,x,\,y)
                  \left(\!\begin{array}{cc}
                    0 & 1 \\
                    0 & 0
                  \end{array}\!\right) \opr{T}(E_2,\,y,\,x) \d y.$$
Now we put $E_1 = E,\,E_2=E+i\varepsilon$ and multiply by $\opr{T}(E
+ i\varepsilon,\,x,\,0)$ from the right to get the formula
$$ \opr{T}(E+i\varepsilon,\,x,\,0) = \opr{T}(E,\,x,\,0) -(i\varepsilon)\int_0^x \opr{T}(E,\,x,\,y)
                  \left(\!\begin{array}{cc}
                    0 & 1 \\
                    0 & 0
                  \end{array}\!\right) \opr{T}(E +
                  i\varepsilon,\,y,\,0) \d y.
                  $$
By assumption we have $\|\opr{T}(E,\,x,\,y)\| \leq C$, and
therefore
$$
\|\opr{T}(E+i\varepsilon,\,x,\,0)\| \leq C + \varepsilon\int_0^x C
\|\opr{T}(E+i\varepsilon,\,y,\,0)\| \d y\,, $$
so by iteration we get
$$
\|\opr{T}(E+i\varepsilon,\,x,\,0)\| \leq C e^{\varepsilon C x}.
$$
Note that $\det \opr{T} =1$ so $\|\opr{T}\| = \|\opr{T}^{-1}\|$.
Putting now $\gamma = ((E+1)^2 \beta^2 + 1)^{-1}$ and using the
condition (\ref{eq:SobCon}) we get
\begin{align}
\int_0^\infty|u(x)|^2 \d x &\geq \gamma \int_K^\infty(|u(x)|^2 +
|u'(x)|^2) \d x  \nonumber \\
 &\geq C^{-2} \gamma (1 + |m_+|^2)
\int_K^\infty e^{-2 \varepsilon C x} \d x\,, \nonumber
\end{align}
hence by (\ref{eq:im}) we infer that
$$\im m_+ \geq \frac{1}{2} C^{-3} \gamma (1 + |m_+|^2) \,.$$
From here the first claim follows immediately, and since
$$ 2 C^3 \gamma^{-1} \geq \frac{1 + |m_+|^2}{\im m_+} \geq
|m_+|\,,
$$
we get also the remaining part.

\end{proof}

\section{Oscillation theory for singular potentials}
\label{app:osc}

In the case of point interactions the classical oscillation theory
fails due to discontinuity of the wave functions. Nevertheless, we
can employ the continuity of the Wronskian and formulate the
oscillation theory using the approach of relative oscillations
\cite{Simon2}. The aim of this appendix is to present briefly the
basic theorems; since the claims are the same as in the regular
case we follow closely the above mentioned article.

We consider Schr\"odinger-type operators on $L^2(l_-,\,l_+)$ with
the singular interactions at the points $x_n \in (l_-,\,l_+),\,n
\in M \subset \mathbb{N}$ which act as
\[
\opr{T} u(x) = -u''(x) + q(x)u(x),
\]
with a real-valued potential $q \in L^1_{\mathrm{loc}}(l_-,\,l_+)$
and the domain
 \begin{multline}
 D(\opr{T}) = \bigg\{u,\,u' \in
AC_{\mathrm{loc}}(l_-,\,l_+) \setminus \bigcup_{n \in M}\{x_n\}\::
\\ \,Tu \in L^2_{\mathrm{loc}}(l_-,\,l_+) \quad \mbox{and} \quad
U(x_n+) = \Lambda_n U(x_n-) \bigg\}\,. \nonumber
\end{multline}
Such an operator is obviously symmetric. Denote by $\opr{H}$ an
arbitrary self-adjoint extension of it satisfying either
\begin{enumerate}
  \item[(a)] $\opr{T}$ is limit point in at least one endpoint, or
  \item[(b)] $\opr{H}$ is defined by separated boundary conditions.
\end{enumerate}
By $\psi_{\pm}(E,\,x)$ we denote real-valued solutions of the
equation $\opr{T} \psi_{\pm}(E,\,x) = E\psi_{\pm}(E,\,x)$, which
satisfy the boundary conditions defining $\opr{H}$ at the points
$l_{\pm}$, respectively. Note that such solutions may not exist, the
theorems given below implicitly assume their existence. In
particulary, their existence is assure for energies $E$ outside the
essential spectrum. And with respect to analyticity in spectral
parameter we may use the oscillation theory also at the edge of the
essential spectrum.

The first theorem to follow provides the basic oscillation result,
while the corollary of the second one is the result used in
Section~\ref{sec:disc_spectrum}. By $W_0(u_1,\,u_2)$ we denote the
number of zeros of the Wronskian $W[u_1,\,u_2](x)$ in the open
interval $(l_-,\,l_+)$, and given $E_1 < E_2$, we put
$N_0(E_1,\,E_2) = \mathrm{dim}\,\mathrm{Ran}
\opr{P}_{(E_1,\,E_2)}$, where $\opr{P}$ is a spectral measure of
the self-adjoint operator $\opr{H}$. In particular, in case of the
pure point spectrum $N_0(E_1,\,E_2)$ simply denotes the number of
eigenvalues in the interval $(E_1,\,E_2)$.

\begin{theorem}
  Suppose that $E_1 < E_2$ and put $u_1 = \psi_-(E_1),\; u_2=\psi_+(E_2)$.
  Then $W_0(u_1,\,u_2) = N_0(E_1,\,E_2)$.
\end{theorem}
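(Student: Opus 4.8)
The plan is to follow the relative oscillation argument of \cite{Simon2}, the point being that its entire machinery transcribes to the singular setting once every appeal to continuity of a solution and its derivative across an interior point is replaced by an appeal to continuity of the Wronskian (Lemma~\ref{def:Wrons}). The engine of the argument is the Lagrange identity $W[u_1,u_2]'(x) = (E_1-E_2)\,u_1(x)u_2(x)$, valid on each interval $(x_n,x_{n+1})$, together with the fact that, by Lemma~\ref{def:Wrons}, $W[u_1,u_2](\cdot)$ extends to a genuinely continuous function on $(l_-,l_+)$. Consequently the zeros of $W[u_1,u_2]$ are well defined and are counted honestly by sign changes of a continuous function, even though $u_1$ and $u_2$ themselves jump at the points $x_n$; this already makes $W_0(u_1,u_2)$ a meaningful quantity in our setting.

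First I would introduce the one-parameter family $W(E,x):=W[\psi_-(E_1),\psi_+(E)](x)$ for $E\in[E_1,E_2]$ and read off the count by deforming $E$ from $E_1$ to $E_2$. At $E=E_1$ both factors solve the same equation, so $W(E_1,\cdot)$ is constant in $x$ and is nonzero whenever $E_1$ is not an eigenvalue, which anchors the count at zero. Differentiating in $E$ and using variation of parameters produces the monotonicity underlying Sturm comparison: the zeros of $W(E,\cdot)$ move in a fixed direction with $E$ and can enter the interval $(l_-,l_+)$ only through an endpoint, and they do so precisely when $E$ crosses an eigenvalue of $\opr{H}$ --- recall that at such an $E^{*}$ the solutions $\psi_-(E^{*})$ and $\psi_+(E^{*})$ become linearly dependent, as they then satisfy both boundary conditions. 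Tracking this passage by passage and evaluating at $E=E_2$ identifies the number of zeros of $W[\psi_-(E_1),\psi_+(E_2)]$ with the number of eigenvalues crossed, i.e. with $N_0(E_1,E_2)$; the existence of $\psi_\pm$ needed here is guaranteed for $E$ off the essential spectrum, as noted above. Equivalently, one may phrase the whole count through a relative Pr\"ufer angle $\Phi$ with $W=\rho_1\rho_2\sin\Phi$, zeros of $W$ corresponding to passages of $\Phi$ through $\pi\mathbb{Z}$.

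The main obstacle is exactly the bookkeeping at the singular points $x_n$, where the classical proof relies on continuity of $u,u'$ and of the ordinary Pr\"ufer angle, all of which now fail. Two things must be checked: that no zero of $W$ is spuriously created or lost at an $x_n$, and that the monotone eigenvalue-crossing count is not disturbed there. Both are furnished by Lemma~\ref{def:Wrons}: since $W$ is continuous at $x_n$, a sign change across $x_n$ is an honest simple zero counted once, while between singular points the Lagrange identity and the $\partial_E$-monotonicity hold verbatim, and the matching of these pieces across $x_n$ is guaranteed because the transfer matrix preserves the symplectic form, $\Lambda^{*}\sigma_2\Lambda=\sigma_2$, which is precisely the content of Lemma~\ref{def:Wrons}. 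I therefore expect the only real work to be this verification that each continuity step of \cite{Simon2} survives the replacement of ``continuity of $(u,u')$'' by ``continuity of $W$'', after which the regular-case conclusion $W_0(u_1,u_2)=N_0(E_1,E_2)$ follows unchanged.
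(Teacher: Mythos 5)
Your proposal coincides with the paper's own treatment: the paper gives no independent proof of this theorem, stating only that the relative oscillation theory of Gesztesy--Simon--Teschl \cite{Simon2} carries over verbatim once the continuity of the Wronskian at the singular points (Lemma~\ref{def:Wrons}, i.e.\ $\Lambda^{*}\sigma_2\Lambda=\sigma_2$) is substituted for the continuity of $(u,u')$, which is precisely the strategy and the key adaptation you identify. Your additional sketch of the internals of \cite{Simon2} (the energy deformation, eigenvalue crossings, and the relative Pr\"ufer angle with $W=\rho_1\rho_2\sin\Phi$) is consistent with that reference, so the proposal matches the paper's argument in approach and substance.
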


\begin{theorem}
  Let $E_1 < E_2$. Assume that either $u_1 = \psi_+(E_1)$ or $u_1 =
  \psi_-(E_1)$ holds, and similarly either $u_2 = \psi_+(E_2)$ or $u_2 =
  \psi_-(E_2)$. Then $W_0(u_1,\,u_2) \leq N_0(E_1,\,E_2)$.
  \label{def:t2osc}
\end{theorem}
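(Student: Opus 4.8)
The plan is to derive the inequality from the equality established in the previous theorem by a relative-oscillation comparison, following \cite{Simon2}; the only genuinely new ingredient is the continuity of the Wronskian through the interaction points, Lemma~\ref{def:Wrons}. The starting point is that, because $E_1<E_2$, the zeros of $W[u_1,u_2]$ all cross in one direction. Writing each solution in Pr\"ufer form $u_i=\rho_i\sin\theta_i,\ u_i'=\rho_i\cos\theta_i$, one has $W[u_1,u_2]=\rho_1\rho_2\sin(\theta_1-\theta_2)$, so its zeros in $(l_-,l_+)$ are exactly the points where the relative angle $\Delta:=\theta_1-\theta_2$ meets $\pi\mathbb{Z}$. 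Since $\frac{\d}{\d x}W[u_1,u_2]=(E_1-E_2)u_1u_2$ on $\mathbb{R}\setminus\cup_n\{x_n\}$, at any interior zero one finds $\Delta'<0$, i.e.\ the relative angle crosses every multiple of $\pi$ strictly downwards.

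Next I would check that the interaction points do not disturb this picture. At $x_n$ the Pr\"ufer data $\rho_i,\theta_i$ jump, but $\sin\Delta=W[u_1,u_2]/(\rho_1\rho_2)$, and by Lemma~\ref{def:Wrons} the numerator is continuous while the denominator stays positive on both sides; hence $\sin\Delta$ keeps its sign across $x_n$ and $\Delta$ remains in the same open strip $(m\pi,(m+1)\pi)$. A lattice value is attained at $x_n$ only when $W[u_1,u_2](x_n)=0$, that is, at a genuine zero counted once. Combining the two facts, $\Delta$ is monotone through the lattice $\pi\mathbb{Z}$, so for every admissible pair
\[
W_0(u_1,u_2)=\#\bigl\{m\in\mathbb{Z}:\ \Delta(l_+)<m\pi<\Delta(l_-)\bigr\},
\]
and the whole count is controlled by the two endpoint values $\Delta(l_\pm)$, which are fixed by the boundary conditions defining $\opr{H}$.

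The comparison itself comes last. For the matched pair $u_1=\psi_-(E_1),\ u_2=\psi_+(E_2)$ the previous theorem identifies the right-hand side with $N_0(E_1,E_2)$, and generically both endpoints then lie strictly inside lattice strips. Replacing $\psi_-(E_1)$ by $\psi_+(E_1)$, or $\psi_+(E_2)$ by $\psi_-(E_2)$, makes the two solutions at that energy share a boundary condition at the corresponding endpoint, which pins the value $\Delta(l_\pm)$ exactly onto $\pi\mathbb{Z}$. Because $W[\psi_-(E_i),\psi_+(E_i)]$ is a nonzero constant (again by Lemma~\ref{def:Wrons}), the relative angle of the two solutions at a single energy is confined to one strip, so each switch moves one endpoint onto the lattice without letting it skip a strip. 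A boundary value lying on $\pi\mathbb{Z}$ is excluded from the open-interval count, so each switch can at most delete the crossing adjacent to that endpoint and never create an interior one; performing the (at most two) switches required to reach an arbitrary admissible pair thus gives $W_0(u_1,u_2)\le N_0(E_1,E_2)$.

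The main obstacle is the bookkeeping in this final step: one must verify that pinning an endpoint onto $\pi\mathbb{Z}$ removes \emph{exactly} the adjacent boundary crossing and does not, via the simultaneous shift of the other endpoint, introduce an extra interior crossing. This is precisely what the strip confinement of the same-energy relative angle, together with the downward-crossing property, controls, since together they bound the motion of each endpoint within a single strip and fix the sign of the induced change in the count. Everything else is identical to the regular argument of \cite{Simon2}, and Lemma~\ref{def:Wrons} is exactly what permits that argument to be carried over verbatim to the present singular setting.
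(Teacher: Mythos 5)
The first thing to note is that the paper contains no proof of this theorem at all: Appendix~\ref{app:osc} states it without argument, saying only that ``since the claims are the same as in the regular case we follow closely the above mentioned article'' \cite{Simon2}, with Lemma~\ref{def:Wrons} and the Pr\"ufer branch convention at the points $x_n$ as the sole singular adaptations. So your sketch is being compared with the Gesztesy--Simon--Teschl argument it tries to reconstruct, and its skeleton is the right one: zeros of $W[u_1,u_2]$ are lattice crossings of the relative angle $\Delta$, the crossings are unidirectional for $E_1<E_2$, and Wronskian continuity across the interaction points keeps the count intact. Two details are glossed over but fixable. First, $\Delta'<0$ at a crossing fails to first order exactly when $u_1$ and $u_2$ vanish simultaneously there ($\Delta'=(E_1-E_2)\sin^2\theta=0$); one needs the standard ``cannot turn around'' second-order argument. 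Second, constancy of the sign of $\sin\Delta$ across $x_n$ does \emph{not} by itself exclude a jump of $\Delta$ by $\pm 2\pi$, i.e.\ a skipped strip with no Wronskian zero; what rules this out is that $\Lambda\in SL(2,\mathbb{R})$ induces an orientation-preserving map of directions commuting with the antipodal map, so that a common lift applied to both $\theta_1,\theta_2$ preserves the strip --- this is precisely what the paper's jump convention $|\theta_i(x_n+)-\theta_i(x_n-)|=0\ (\mathrm{mod}\,\pi)$ encodes, and you should invoke it rather than the sign of $\sin\Delta$ alone.

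The genuine gap is in your final switching step. The assertion that a switch ``can at most delete the crossing adjacent to that endpoint and never create an interior one'' is not what actually happens: replacing $\psi_-(E_1)$ by $\psi_+(E_1)$ shifts $\Delta$ by a function $\delta(x)$ confined to a single strip of width $\pi$ at \emph{every} $x$, so the far, unpinned endpoint value $\Delta(l_-)$ can perfectly well cross a lattice point upward, creating a new crossing there. The inequality nevertheless survives for a single switch, but by compensation, not absence of creation: the pinned endpoint lands exactly on $\pi\mathbb{Z}$, is excluded from the open-interval count, and this loss of one offsets the possible gain of one at the other end. More seriously, chaining two such switches to reach the pair $(\psi_+(E_1),\psi_-(E_2))$ does not work: after the first switch the pinning at $l_+$ has been consumed, and the second shift can produce a net gain with no pinned endpoint left to offset it, so your bookkeeping lets the doubly mixed case slip through. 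That case needs a separate idea --- the cleanest being that $(\psi_+(E_1),\psi_-(E_2))$ is again a \emph{matched} pair for the reflected problem, so the equality theorem applies to it directly; alternatively one uses the energy-monotonicity lemmas for the Pr\"ufer angles as in \cite{Simon2}. Without one of these ingredients your argument, as written, does not prove the theorem in all four cases it covers.
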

Next we introduce Pr\"{u}fer variables $\rho_i,\,\theta_i$ defined
by
\[
\left( \begin{array}{c}
            u_i(x) \\
            u_i'(x)
       \end{array} \right) = \rho_i(x)
       \left(\begin{array}{c}
            \cos \theta_i(x) \\
            \sin \theta_i(x)
       \end{array}\right),
\]
where $\rho_i$ is chosen positive and $\theta_i$ is uniquely
determined by its boundary value and the requirement that
$\theta_i$ is continuous on $(l_-,\,l_+) \setminus \bigcup_{n \in
M}  \{x_n\}$ while its discontinuity at the sites $x_n$ of the
point interactions satisfies $|\theta_i(x_n+) - \theta_i(x_n-)|
=0\: (\mathrm{mod}\, \pi)$.
\begin{corollary}
  \label{def:osccol}
  Suppose that $E_1$ is the edge of the essential spectrum, and $u_1 =
  \psi_-(E_1)$ or $u_1 = \psi_+(E_1)$. Then $\opr{H}$ has infinitely many
  eigenvalues below $E_1$ if $\theta_1(\cdot)$ is unbounded.
\end{corollary}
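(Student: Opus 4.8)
The plan is to convert the hypothesis ``$\theta_1$ unbounded'' into an infinite zero count of a suitable Wronskian and then read off the eigenvalue count from Theorem~\ref{def:t2osc}. Writing any two solutions in the Pr\"ufer form $u=\rho\cos\theta,\ u'=\rho\sin\theta$, one computes for a comparison solution $w$ and the edge solution $u_1=\psi_\pm(E_1)$
\[
W[w,\,u_1](x) = \rho_w(x)\rho_1(x)\,\sin\big(\theta_1(x)-\theta_w(x)\big),
\]
so that, because $\rho_w,\rho_1>0$, a zero of the Wronskian occurs precisely where the angle difference $\theta_1-\theta_w$ meets an integer multiple of $\pi$. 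By Lemma~\ref{def:Wrons} the function $W[w,u_1]$ is continuous across the interaction sites, while by construction each Pr\"ufer angle changes only by a multiple of $\pi$ at an $x_n$; consequently every passage of the continuous branch of $\theta_1-\theta_w$ through a multiple of $\pi$ (and every site at which it jumps across one by an odd multiple of $\pi$) yields a genuine zero of the continuous function $W[w,u_1]$. In particular, if $\theta_1-\theta_w$ is unbounded then $W_0(w,u_1)=\infty$.

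I would then argue by contradiction. Suppose $\opr{H}$ had only finitely many eigenvalues in $(-\infty,E_1)$ and choose $F<E_1$ below all of them; since $E_1=\inf\sigma_\mathrm{ess}(\opr{H})$, the point $F$ then lies below the entire spectrum, the form $\opr{H}-F$ is positive, and the equation $\opr{T}w=Fw$ is non-oscillatory. Hence a comparison solution $w:=\psi_-(F)$ (which exists because $F$ is outside the essential spectrum; $\psi_+(F)$ would serve equally well) has only finitely many zeros and its Pr\"ufer angle $\theta_w$ is bounded. As $\theta_1$ is unbounded by assumption, the difference $\theta_1-\theta_w$ is unbounded, and the first paragraph gives $W_0(w,u_1)=\infty$. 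Applying now Theorem~\ref{def:t2osc} to the pair of energies $F<E_1$ with the solutions $w=\psi_\pm(F)$ and $u_1=\psi_\pm(E_1)$ yields
\[
\infty = W_0(w,\,u_1) \leq N_0(F,\,E_1),
\]
so $\opr{H}$ has infinitely many eigenvalues in $(F,E_1)\subset(-\infty,E_1)$. This contradicts the supposed finiteness and proves the claim.

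The routine steps are the Pr\"ufer computation and the invocation of Theorem~\ref{def:t2osc}; I expect the genuine work to lie in two places. The first is verifying that below the spectrum the singular equation remains non-oscillatory, so that $\theta_w$ is honestly bounded despite the infinitely many interaction sites --- this should follow from positivity of $\opr{H}-F$ together with the fixed transfer matrix $\Lambda$, but it must be checked that the mod-$\pi$ jumps of $\theta_w$ do not accumulate. The second, and more delicate, is making the zero count of the continuous Wronskian rigorous at the points $x_n$: one has to confirm that the jumps of the two Pr\"ufer angles cannot conspire to suppress the sign changes of $W[w,u_1]$ forced by the unboundedness of $\theta_1-\theta_w$. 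Both are exactly the bookkeeping that the relative oscillation framework of \cite{Simon2} is designed to handle, and I would dispatch them by appealing to the continuity of the Wronskian (Lemma~\ref{def:Wrons}) as the substitute for the continuity of the wave function used in the regular theory.
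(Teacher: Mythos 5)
Your proposal is correct and takes essentially the same route as the paper: the paper likewise compares $u_1$ with a solution $u_2=\psi_\pm(E)$ at an energy below the spectrum whose Pr\"ufer angle is bounded (there phrased as ``$E$ negative large enough, in analogy with the regular case''), uses the identity $W[u_1,u_2]=\rho_1\rho_2\sin(\theta_2-\theta_1)$ together with the continuity of the Wronskian to conclude $W_0(u_1,u_2)=\infty$, and then invokes Theorem~\ref{def:t2osc}. Your proof-by-contradiction wrapper and the explicit derivation of the boundedness of $\theta_w$ from positivity of $\opr{H}-F$ are only a more careful packaging of the same argument (and in fact tidy up the paper's implicit semiboundedness assumption), not a different method.
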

\begin{proof}
In analogy with the regular case the function
$\theta_2$corresponding to $u_2=\psi_{\pm}(E)$ is bounded for
negative $E$ large enough. This implies that $|\theta_2 -
\theta_1| \to \infty$ and since $W[u_1,\,u_2](x) = \rho_1(x)
\rho_2(x) \sin (\theta_2(x) - \theta_1(x)) $ we get
$W_0(u_1,\,u_2) = \infty$. Hence Theorem \ref{def:t2osc}.
completes the proof.
\end{proof}

\subsection*{Acknowledgment}

The research was supported by the Czech Ministry of Education,
Youth and Sports within the project LC06002.

\end{document}